\documentclass[a4paper,UKenglish,autoref,thm-restate]{lipics-v2021}
\usepackage[nameinlink]{cleveref}

\usepackage{subcaption}
\usepackage{siunitx}
\usepackage{complexity}
\usepackage{xspace}
\usepackage{nicefrac}
\usepackage{mathtools}

\bibliographystyle{plainurl}

\graphicspath{{./figures/}}

\nolinenumbers

\crefname{figure}{Figure}{Figures}
\crefname{theorem}{Theorem}{Theorems}
\crefname{lemma}{Lemma}{Lemmas}
\crefname{corollary}{Corollary}{Corollaries}
\crefname{section}{Section}{Sections}
\crefname{appendix}{Appendix}{Appendices}
\crefname{remark}{Remark}{Remarks}
\crefname{claim}{Claim}{Claims}
\crefname{conjecture}{Conjecture}{Conjectures}
\crefname{observation}{Observation}{Observations}

\captionsetup[subfigure]{labelformat=simple}

\newcommand{\BigO}{\mathcal{O}}

\newcommand{\good}{independent\xspace}
\newcommand{\suffixgood}{suffix-independent\xspace}
\newcommand{\out}{outer\xspace}

\newcommand{\paths}{\mathcal{P}}
\newcommand{\suffixgoodpaths}{\mathcal{P}_{\sin}}
\newcommand{\chordfreepaths}{\mathcal{P}^*}

\newcommand{\leveledges}{\ell}
\newcommand{\cuttingedges}{\ensuremath{\psi}}

\newcommand{\verticalLine}{\mathcal{L}}
\newcommand{\ccwchain}{m}

\newcommand{\conv}{\operatorname{conv}}
\newcommand{\layernumber}{L}
\newcommand{\layer}{L}
\newcommand{\propk}[1]{$#1$-layer property\xspace}
\newcommand{\prop}{\propk{k}}

\hideLIPIcs
\title{On the Connectivity of the Flip Graph of Plane Spanning Paths}

\author{Linda {Kleist}}{Department of Computer Science, TU Braunschweig, Germany}{kleist@ibr.cs.tu-bs.de}{https://orcid.org/0000-0002-3786-916X}{}

\author{Peter {Kramer}}{Department of Computer Science, TU Braunschweig, Germany}{kramer@ibr.cs.tu-bs.de}{https://orcid.org/0000-0001-9635-5890}{}

\author{Christian {Rieck}}{Department of Computer Science, TU Braunschweig, Germany}{rieck@ibr.cs.tu-bs.de}{https://orcid.org/0000-0003-0846-5163}{}

\authorrunning{Linda Kleist, Peter Kramer, and Christian Rieck}

\Copyright{Linda Kleist, Peter Kramer, and Christian Rieck}

\ccsdesc[500]{Theory of computation~Computational geometry}
\ccsdesc[500]{Mathematics of computing~Graph theory}
\keywords{flip graph, connectivity, point set, plane spanning path, convex layer}

\begin{document}

\maketitle
	
\begin{abstract}
Flip graphs of non-crossing configurations in the plane  are widely studied objects, e.g., flip graph of triangulations, spanning trees, Hamiltonian cycles, and perfect matchings.
Typically, it is an easy exercise to prove connectivity of a flip graph.
In stark contrast, the connectivity of the flip graph of plane spanning paths on point sets in general position {has been} an open problem for more than~16~years.

In order to provide new insights, we investigate certain induced subgraphs.
Firstly, we provide tight bounds on the diameter and the radius of the flip graph of spanning paths on points in convex position with one fixed endpoint.
Secondly, we show that so-called \emph{\suffixgood paths} induce a connected subgraph.
Consequently, to answer the open problem affirmatively, it suffices to show that each path can be flipped to some \suffixgood path.
Lastly, we investigate paths where one endpoint is fixed and provide tools to flip to \suffixgood paths.
We show that these tools are strong enough to show connectivity of the flip graph of plane spanning paths on point sets {with at most two convex~layers}.

\end{abstract}

\section{Introduction}
\label{sec:introduction}

The reconfiguration of objects by small local operations is
a basic concept that appears in various contexts, e.g., robot motion planning, morphing, network dynamics, discrete mathematics, and computer science~\cite{nishimuraIntroReconfiguration}.
A \emph{flip graph} captures the structure of the reconfiguration space for the problem at hand: 
The vertices of the graph correspond to the possible configurations, i.e., states of the system, while an edge indicates that two configurations can be transformed into each other by a small local modification --- a so-called~\emph{flip}.
\begin{figure}[b]
	\centering
	\includegraphics[page=2]{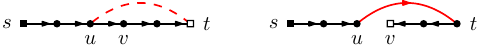}
	\caption{Plane spanning paths that can be transformed into each other by a flip.}
	\label{fig:flip}
\end{figure}
A famous example is the flip graph of triangulations of a convex $n$-point set where a flip operation replaces one diagonal by another one; this flip graph is the $1$-skeleton of the $(n-3)$-dimensional associahedron.
More generally, flip graphs of various non-crossing configurations on point sets are widely studied, e.g.,  triangulations, spanning~trees, Hamiltonian cycles, or perfect matchings; see~\cref{subsec:related-work}.
In this paper, we investigate the flip graph of \emph{plane spanning paths} where a flip consists of the insertion and deletion of one edge,
as illustrated in~\cref{fig:flip,fig:flipgraph-33}.
\begin{figure}[p]%
    \centering%
    \includegraphics[scale=.412,angle =90]{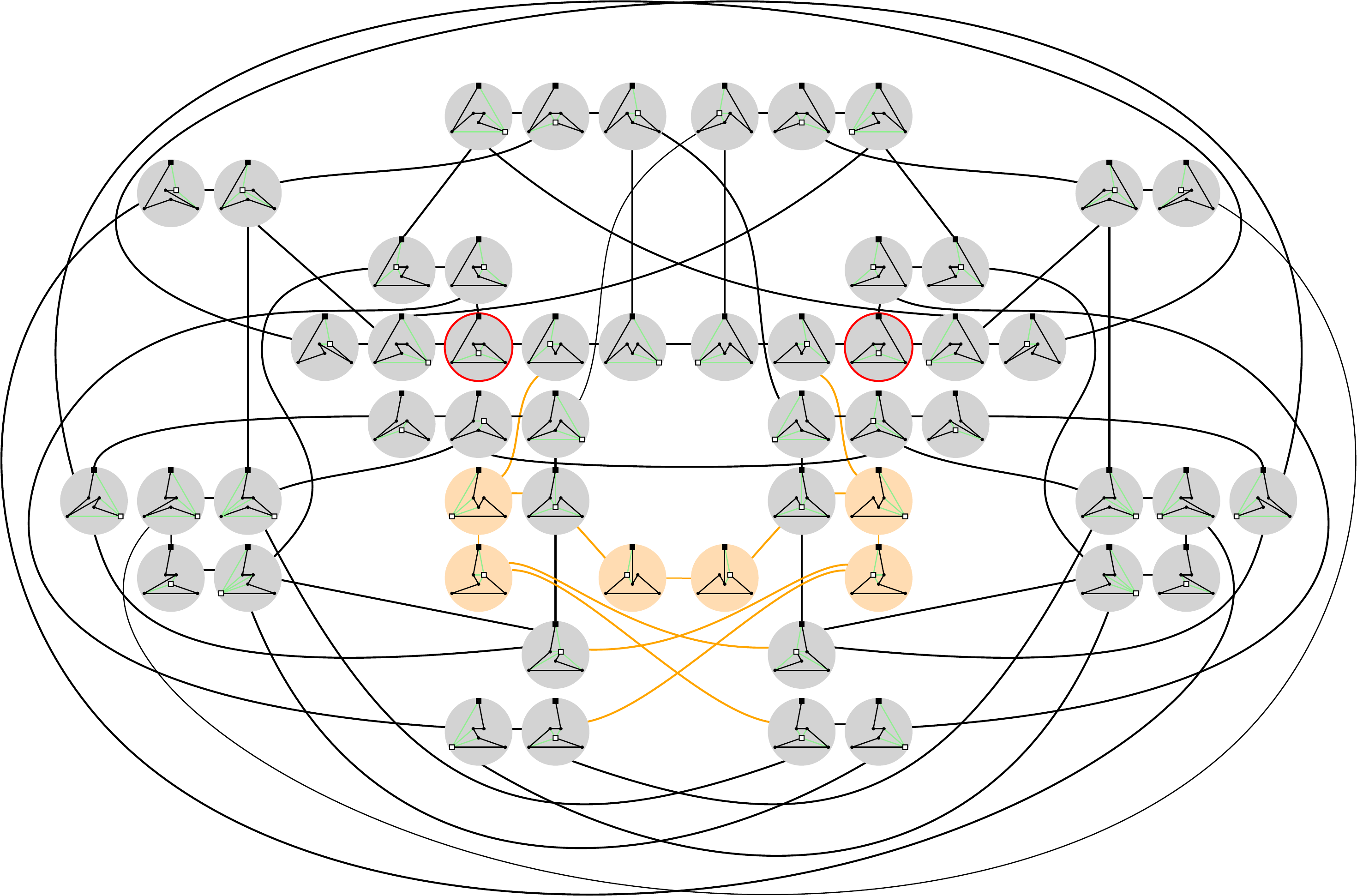}%
    \caption{
        The flip graph of plane spanning paths with a fixed start on a point set with two layers, where each layer contains three points.
        Six paths, highlighted in orange, are not \suffixgood.
        The two spirals, highlighted in red, have flip distance 5 and are a furthest pair.
    }
    \label{fig:flipgraph-33}%
\end{figure}%

Typically, one is interested in properties such as the diameter of the flip graph (the maximum length of a shortest reconfiguration sequence over all pairs), the distance between two given configurations, or whether the flip graph contains a Hamiltonian cycle, i.e., a cyclic listing of all configurations.
Usually, it is a simple exercise to prove that a flip graph is connected.
In stark contrast, the connectivity of the flip graph of plane spanning paths has been an open problem for more than 16 years~\cite{akl2007planar,bose2009flipsinplanar} and has been conjectured by Akl~et~al.~\cite{akl2007planar}.

\begin{conjecture}[Akl et al.~\cite{akl2007planar}]
    \label{conj:conn}
    For every point set $S$ in general position, the flip graph of plane spanning paths on $S$ is connected.
\end{conjecture}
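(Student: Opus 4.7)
The plan is to follow the two-step reduction sketched in the abstract: (a) show that the \suffixgood paths span a connected subgraph of the flip graph, and (b) show that every plane spanning path can be flipped to some \suffixgood path. Combining (a) and (b) gives connectivity of the whole flip graph. Since (a) is announced as one of the paper's main results, my proposal would take it as a black box and direct all the effort at~(b).

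For (b), I would proceed by induction on the number $k$ of convex layers of $S$, where the outermost layer is $\conv(S)$ and subsequent layers are obtained by iteratively peeling off convex hulls. The base cases $k \le 2$ are already handled by the dedicated tools the paper develops. For the inductive step, the aim is to flip an arbitrary plane spanning path $P$ on $k$ layers into one whose suffix (in the appropriate ordering along $P$) lies entirely on the inner $k-1$ layers in a controlled way, so that the inductive hypothesis can be invoked on the remainder.

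Concretely, I would first use flips to place an endpoint of $P$ onto the outer layer, then exploit the tight diameter bounds on fixed-endpoint convex-position paths to regularize the behavior of $P$ along $\conv(S)$ into a structured prefix, for instance a boundary walk or spiral. With this prefix in place, the \propk{k} machinery should allow peeling the outer layer off cleanly, reducing the interior of $P$ to a plane spanning path on $k-1$ layers and hence to the inductive hypothesis. Appending an \good traversal of the peeled outer layer would then yield a genuinely \suffixgood configuration, completing step~(b).

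The main obstacle is the peeling step itself. In the two-layer situation every inter-layer edge of $P$ has a unique ``inside,'' and local flips suffice to route edges without obstruction; but when $k \ge 3$, an edge between layers~$i$ and~$i+1$ can block a desired flip between layers~$i+1$ and~$i+2$, producing cascading conflicts that cannot be resolved by purely local arguments. The crux, in my view, is to design an invariant --- plausibly a potential function counting ``misbehaving'' inter-layer edges, weighted by depth --- that is monotonically reduced by the available flips, paired with a case analysis driven by how the endpoint's incident edge sits relative to the outer hull. This is where I expect prior attempts to have stalled, and where genuinely new structural insight into the interaction between convex layers is required.
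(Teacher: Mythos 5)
There is a genuine gap, and it is the one you name yourself. The statement you are addressing is \cref{conj:conn}, which the paper does \emph{not} prove: it is an open conjecture (open for over 16 years), and the paper establishes it only for point sets with at most two convex layers (\cref{cor:2conv}), besides proving connectivity of the subgraph of \suffixgood paths (\cref{thm:suffix-good}). Your step (a) and the reduction ``it suffices to flip every path to some \suffixgood path'' are exactly the paper's own observation following \cref{thm:suffix-good}; they are not in dispute. But your step (b) is the entire content of the conjecture, and your proposal does not carry it out: for $k\ge 3$ layers you describe a hoped-for peeling argument, invoke ``the \prop machinery,'' and then concede that the cascading inter-layer obstructions require ``genuinely new structural insight'' and an invariant/potential function that you do not construct. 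A plan that ends by identifying where the missing idea would have to go is not a proof.

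Concretely, two steps would fail as written. First, the \prop is only \emph{defined} for general $k$ in the paper; it is \emph{proved} only for $k\le 2$ (\cref{thm:l0-level-increase}, via \cref{lem:convex-region-chord-flip} and \cref{lem:convex-fixed-flipgraph}), and \cref{thm:k-prop-implies-connection} takes it as a hypothesis. Invoking that machinery to ``peel the outer layer off cleanly'' for arbitrary $k$ assumes precisely what must be shown, so the induction on the layer number is circular at its crucial step. Second, even granting a structured prefix along $\layer_0$, removing the outer layer does not reduce the problem to the inductive hypothesis in the form you state it: the remaining path on the inner layers need not have an \out endpoint for the peeled subconfiguration, and edges between $\layer_0$ and deeper layers (the analogue of cutting edges, which the two-layer proof handles by a careful case analysis in \cref{lem:cutting-edge-removal}) have no controlled counterpart once three or more layers interact --- this is exactly the obstruction you flag. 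So your route coincides with the paper's intended strategy, but neither you nor the paper closes the gap; the conjecture remains open beyond two convex layers.
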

The conjecture has been confirmed for some special cases, namely, point sets with up to~8 points~\cite{akl2007planar}, sets in convex position~\cite{akl2007planar}, and generalized double~circles~\cite{2023Aicholzer}.

In a spirit similar to \Cref{conj:conn}, Aichholzer et al.~\cite{2023Aicholzer} conjecture the connectivity of subgraphs induced by all paths with a fixed endpoint.
In fact, they show that~\Cref{conj:connFixed} implies~\Cref{conj:conn}.
\begin{conjecture}[Aichholzer et al.~\cite{2023Aicholzer}]\label{conj:connFixed}
    For every point set $S$ in general position and every point $s\in S$, the flip graph of plane spanning paths on $S$ starting at~$s$ is connected.
\end{conjecture}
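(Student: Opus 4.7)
The plan is to attack \Cref{conj:connFixed} by reducing it to the subproblem of flipping, from any plane spanning path starting at $s$, to some \suffixgood path. Combined with the result stated in the abstract that \suffixgood paths induce a connected subgraph, this reduction suffices: any two spanning paths $P_1, P_2$ starting at $s$ are first flipped to \suffixgood paths $P_1', P_2'$, and then joined within the \suffixgood subgraph.

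To carry out the reduction, I would proceed by induction on the number $k$ of convex layers of $S$. The base case $k \leq 2$ is exactly the statement mentioned at the end of the abstract. For the inductive step with $k \geq 3$, I would focus on the outermost layer $C_1$ and use flip operations tailored to points in convex position---exploiting the tight diameter and radius bounds obtained earlier in the paper for convex point sets with a fixed endpoint---to maneuver the vertices of $C_1$ along the path into a structurally favorable arrangement, for instance forcing them to occur consecutively as a prefix or suffix of the path. Once the outer layer is resolved in this sense, I would try to treat it as a boundary and invoke the inductive hypothesis on the inner $k-1$ layers.

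The main obstacle I anticipate is that flips do not respect the convex-layer decomposition: an edge of the path may cross several layers at once, so a single flip can simultaneously perturb multiple layers. To cope, I would introduce a potential function capturing a path's distance from being \suffixgood---for example, a lexicographic count of structural violations weighted by layer depth from the outside inward---and aim to show that any non-\suffixgood path admits a flip that strictly decreases this potential. The crux is proving that such a decreasing flip always exists and that the tools from the two-layer setting compose across layers without cyclic obstructions; if this fails, one is likely forced to design a genuinely new flip move that crosses more than two layers simultaneously, which would be a substantial departure from the techniques currently at hand.
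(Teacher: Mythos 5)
This statement is an open conjecture, not a theorem of the paper, so there is no proof in the paper to compare against; the paper only establishes it in special cases. Your proposal is therefore a research program rather than a proof, and it faithfully reconstructs the very program the paper pursues: flip every path to some \suffixgood path and then invoke \cref{thm:suffix-good}, with \cref{thm:2conv} and \cref{cor:2conv} playing the role of your base case $k\leq 2$. The discussion section explicitly frames \cref{lem:convex-region-chord-flip} as a potential engine for the layer-by-layer induction you describe, so you are not proposing a genuinely different route.

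The genuine gap is exactly where you flag it. You have no argument that a potential-decreasing flip (under any of the weightings you suggest) always exists once $k\geq 3$, and this is precisely the step the paper also leaves open: the \prop is proved only for $k\leq 2$ (\cref{thm:l0-level-increase}), and \cref{thm:k-prop-implies-connection} gives nothing without that property for larger $k$. Your idea of forcing the outer layer to be a consecutive prefix is also stronger than necessary and likely harder than what the paper's machinery aims for --- layer-monotonicity suffices for suffix-independence (\cref{lem:layer-monotone-paths}), but suffix-independent paths need not be layer-monotone, and insisting on full segregation of $\layer_0$ will run into exactly the cross-layer-edge interference you anticipate. In short: the reduction in your first paragraph is sound and matches the paper, but the inductive step remains unproved, and the proposal does not establish the conjecture.
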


\subsection{Our contribution}
\label{subsec:contributions}
In this work, we tackle~\Cref{conj:conn,conj:connFixed}.
For a point set $S$, we denote the set of all paths on $S$ by $\paths(S)$ and the subset of paths  starting at a fixed $s\in S$ by~$\paths(S,s)$.
Firstly, we settle \cref{conj:connFixed} for point sets in convex position.
More precisely, we provide tight bounds for the diameter and the radius.
Note that for~$n=1,2$, the flip graphs are trivial, as they consist of a single vertex.

\begin{restatable}{theorem}{convexBounds}
	\label{lem:convex-fixed-flipgraph}
	Let $S$ be a set of $n$ points in convex position and let $s\in S$.
    For~${n\geq3}$, the flip graph of~$\paths(S,s)$ has diameter $2n-5$ and radius $n-2$.
    Moreover, the spirals are exactly the centers of the flip graph.
\end{restatable}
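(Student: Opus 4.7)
The strategy is to encode each plane spanning path starting at $s$ as a binary sequence, translate flips into a clean combinatorial rule on these sequences, and read off the three claims.

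\textit{Setup.} Label $S$ as $p_0 = s, p_1, \ldots, p_{n-1}$ in counterclockwise order. A short inductive argument using non-crossing shows that for every $P = (s = v_0, v_1, \ldots, v_{n-1}) \in \paths(S, s)$ the set $\{v_0, \ldots, v_k\}$ is always a contiguous hull arc with $v_k$ at one of its two endpoints: if the next vertex $v_{k+1}$ were strictly interior to the remaining arc, the chord $v_k v_{k+1}$ would split the unvisited points into two nonempty sides which the path cannot subsequently visit without crossing. I encode $P$ by $\sigma(P) = b_1 b_2 \cdots b_{n-2} \in \{L,R\}^{n-2}$, where $b_k$ records whether $v_k$ sits at the CW or CCW end of the visited arc after step $k$ (the final vertex is forced once one point remains). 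The map $\sigma$ is a bijection, and the two spirals correspond exactly to the constant strings $L^{n-2}$ and $R^{n-2}$.

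\textit{Flip rule.} Since $s$ is fixed, every flip is a suffix reversal: for some $0 \le i \le n-3$, it deletes $v_i v_{i+1}$ and inserts $v_i v_{n-1}$, producing $(v_0, \ldots, v_i, v_{n-1}, v_{n-2}, \ldots, v_{i+1})$. Applying the contiguity lemma to the new path, validity is equivalent to $v_{n-1}$ being an endpoint of the remaining arc $R_i$ after step $i$, so that the visited prefix stays contiguous after the jump; once this holds, the reversed suffix merely shrinks $R_i$ monotonically and preserves contiguity. A direct count of the left/right steps in the tail shows that the single unvisited point $v_{n-1}$ lands at an endpoint of $R_i$ precisely when $b_{i+1} = b_{i+2} = \cdots = b_{n-2}$, and the new sequence is then $(b_1, \ldots, b_i, \overline{b_{i+1}}, \ldots, \overline{b_{n-2}})$. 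Establishing this equivalence cleanly is the technical heart of the proof and where I expect the main effort.

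\textit{Diameter, radius, centers.} Let $\beta(b)$ denote the number of maximal blocks of $b$. By the flip rule, each flip either splits the last block (flipping a proper tail of it, $\beta \mapsto \beta + 1$) or flips the entire last block (which then merges with its predecessor, $\beta \mapsto \beta - 1$); the first letter $b_1$ changes only via the unique flip between the two spirals. A short induction then yields the closed-form distance
\[
    d(b, L^{n-2}) = \beta(b) - 1 + \mathbf{1}[b_1 = R],
\]
and symmetrically for $R^{n-2}$; in particular the flip graph is a tree in which the two spirals are adjacent. The maximum of this formula is $n-2$, attained by the alternating string beginning with the opposite letter, so each spiral has eccentricity $n - 2$, giving $\mathrm{rad} \le n - 2$. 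The two alternating strings $LRLR\cdots$ and $RLRL\cdots$ lie in different subtrees rooted at $L^{n-2}$, so their distance equals $(n-3) + (n-2) = 2n - 5$; conversely, since only the R-starting alternating string achieves $d(\cdot, L^{n-2}) = n-2$, any two distinct sequences $b, b'$ satisfy $d(b, L^{n-2}) + d(b', L^{n-2}) \le 2n - 5$, matching the upper bound. The tree inequality $\mathrm{rad} \ge \lceil \mathrm{diam}/2 \rceil$ then forces $\mathrm{rad} = n - 2$. Finally, any non-spiral $b$ has $\beta(b) \ge 2$, so its distance to the opposite-starting alternating string is at least $(n-2) + 1 = n - 1 > n - 2$, proving that the two spirals are the unique centers.
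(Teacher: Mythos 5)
Your proposal is correct, but it takes a genuinely different route from the paper. The paper argues directly and geometrically: for the upper bounds it shows that every path can be flipped to one of the two spirals in at most $n-3$ steps (each flip inserts the missing hull edge at the current endpoint), and the two spirals are adjacent, giving diameter at most $2n-5$ and radius at most $n-2$; for the lower bound it analyses the two zigzag paths, noting that the edge at $s$ can only be flipped once the path has become a spiral and counting shared edges, and it deduces uniqueness of the centers from the spirals lying on every shortest zigzag-to-zigzag sequence. You instead build a complete combinatorial model: the bijection with $\{L,R\}^{n-2}$, the characterization of flips as complementing a constant suffix, and the run/block analysis yield an exact distance formula to each spiral, the fact that the induced flip graph is a tree, and then diameter, radius and centers follow mechanically; your eccentricity bound of at least $n-1$ for every non-spiral is in fact a tighter and more explicit uniqueness argument than the paper's one-line remark. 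The step you flag as the technical heart is indeed the only place requiring care: that the suffix reversal at $v_i$ is a valid flip exactly when $v_{n-1}$ is an endpoint of the remaining arc, equivalently when $b_{i+1}\cdots b_{n-2}$ is constant; this needs both directions of the equivalence ``all prefixes contiguous $\Leftrightarrow$ plane'' (the same fact gives surjectivity of $\sigma$, which you assert but do not prove), yet both are short interleaving arguments in convex position, so the plan goes through. What your approach buys is finer structural information (exact flip distances, vertex degrees equal to the length of the last run, uniqueness of shortest flip sequences); what the paper's approach buys is brevity, avoiding any encoding machinery.
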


Secondly, we show that the flip graph of so-called \emph{\suffixgood paths} is connected.
A~path is \emph{\suffixgood} if each suffix starts on the boundary of its convex hull; for precise definitions, we refer to~\cref{sec:defs}.
We denote the set of \suffixgood paths starting at~$s\in S$ by $\suffixgoodpaths(S,s)$.

\begin{restatable}{theorem}{thmSuffix}
    \label{thm:suffix-good} 
    For every point set $S$ in general position and every  point $s\in \layer_0$, the flip graph of $\suffixgoodpaths(S,s)$ is connected.
\end{restatable}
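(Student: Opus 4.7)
The plan is to prove \Cref{thm:suffix-good} by strong induction on $n = |S|$. The base case $n \leq 2$ is immediate since $\suffixgoodpaths(S, s)$ contains a single path. For the inductive step, fix $s \in \layer_0$ and let $q \in \partial \conv(S)$ be a neighbor of $s$ on the outer convex hull; since $q$ remains a hull vertex after removing $s$, the inductive hypothesis applies to the pair $(S \setminus \{s\}, q)$. The reduction is to show that every $P \in \suffixgoodpaths(S, s)$ can be transformed, via suffix-independence-preserving flips, into a path whose second vertex is $q$. Once this is achieved, the tails starting at $q$ are suffix-independent paths on $S \setminus \{s\}$ and are mutually connected by the inductive hypothesis; these sub-path flips lift to the full flip graph while preserving the edge $sq$ and suffix-independence on $S$.

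Every flip in $\paths(S, s)$ acts by suffix reversal: a flip at position $i \geq 1$ replaces edge $p_i p_{i+1}$ with edge $p_i p_n$, yielding $(s, p_2, \ldots, p_i, p_n, p_{n-1}, \ldots, p_{i+1})$. Only the flip at position $1$---the \emph{full reversal}---changes the second vertex (from $p_2$ to $p_n$). For such a flip to preserve suffix-independence, two conditions are required: (i)~edge $sp_n$ must not cross any edge of the path, and (ii)~the reversed path is itself suffix-independent, which unpacks to the tail $(p_2, \ldots, p_n)$ being \emph{doubly peelable}, i.e., simultaneously suffix-independent and prefix-independent on $S \setminus \{s\}$. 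In contrast, flips at positions $\geq 2$ never touch the second vertex and correspond one-to-one with flips within $\suffixgoodpaths(S \setminus \{s\}, p_2)$, whose connectivity is granted by the inductive hypothesis since $p_2 \in \partial \conv(S \setminus \{s\})$ follows from $P$ being suffix-independent.

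Combining these observations, the plan to reach second vertex $q$ is as follows. First, apply flips at positions $\geq 2$ (justified by the inductive hypothesis on $S \setminus \{s\}$ with start $p_2$) to transform the tail into a doubly peelable suffix-independent path from $p_2$ to $q$ on $S \setminus \{s\}$. Second, execute the full reversal: condition~(i) holds because $sq \in \partial \conv(S)$ is an edge of the outer convex hull and hence cannot be crossed by any other path edge; condition~(ii) holds by the double peelability arranged in the first step. After the reversal, the new path has second vertex $q$, completing the reduction.

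The main obstacle is the auxiliary claim that for any point set $T$ in general position and any two distinct hull vertices $u, v \in \partial \conv(T)$, there exists a doubly peelable Hamiltonian path on $T$ from $u$ to $v$. I would prove this by a separate induction on $|T|$: choose a hull vertex $w \in \partial \conv(T) \setminus \{u, v\}$ whenever $|\partial \conv(T)| \geq 3$ (small cases are direct), recursively construct a doubly peelable path from $u$ to $v$ on $T \setminus \{w\}$, and insert $w$ at a position along the recursively obtained path that is compatible with both peeling orders of $T$. The delicate part is choosing this insertion point, which relies on the fact that $w$ is a hull vertex of $T$ and therefore eventually lies on the boundary of some intermediate subset encountered during the peeling, so that inserting $w$ adjacent to that boundary preserves both hull conditions at every step.
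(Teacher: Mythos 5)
Your high-level strategy is essentially the same as the paper's: induct on $|S|$, observe that flips that fix the second vertex $p_2$ act only on the suffix $P[p_2,t]$ (which is a suffix-independent path on $S\setminus\{s\}$ starting at a hull vertex $p_2$ of that set), and use a strongly suffix-independent (your ``doubly peelable'') path to change the second vertex by a single flip at $s$. The one cosmetic difference is that you route every path through a single canonical second vertex $q$, a hull neighbor of $s$, which buys you the convenience that $sq$ is a hull edge and hence trivially uncrossable. The paper instead transforms $P_1$'s tail into a strongly suffix-independent path from $s_1$ to $s_2$ (the second vertex of $P_2$) and flips $ss_1\to ss_2$; there the validity of $ss_2$ is less immediate but follows from $ss_2\in P_2$ and $s_2$ being a vertex of $\conv(S\setminus\{s\})$. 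Both routings are sound, and the two inductions are interchangeable.

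Where you have a genuine gap is the auxiliary claim that any two hull vertices $u,v$ of a point set $T$ are joined by a strongly suffix-independent Hamiltonian path. This is precisely the paper's \cref{lem:suffix-good-reversible-path}, which the paper treats as the real technical content: it is proved by an explicit, non-recursive construction (split $T$ by the vertical line through $u$, collect the left half in circular order from $u$, then collect the right half by increasing $x$-coordinate; separating lines through each vertex then witness both peelings). Your recursive sketch --- remove a third hull vertex $w$, recurse, and ``insert $w$ at a position compatible with both peeling orders'' --- does not actually say where to insert $w$ or why such a position must exist. The difficulty is real: $w$'s hull neighbors $w_-,w_+$ may be far apart on the recursively obtained path, and inserting $w$ adjacent to one of them can create a crossing or destroy one of the two peeling orders; conversely, positions where both new edges $xw,wy$ stay inside the empty ``cap'' cut off by $w$ need not have $x$ and $y$ consecutive on the smaller path. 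You acknowledge this is the delicate part, but no argument is given. As stated, your proof of \cref{thm:suffix-good} therefore rests on an unproven lemma; if you replace that sketch with the paper's direct construction (or flesh out the insertion argument), the rest of your proposal goes through.
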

\cref{thm:suffix-good} may serve as a crucial building block to prove \cref{conj:conn,conj:connFixed} as it  suffices to flip each path to some \suffixgood path.
We investigate a point set $S$ based on its decomposition into $\layernumber(S)$ many \emph{convex layers}~$\layer_0,\layer_1, \dots$.
Using \cref{thm:suffix-good} alongside other tools, we conclude connectivity of the flip graph of plane spanning paths on each point set with at most two convex layers.
As a crucial tool, we consider paths that start at a fixed point on the convex~hull.

\begin{restatable}{theorem}{thmConvOut}
    \label{thm:2conv}
    Let $S$ be a point set in general position with $\layernumber(S)\leq 2$, and~$s\in \layer_0$.
    Then the flip graph of~$\paths(S,s)$ is connected.
\end{restatable}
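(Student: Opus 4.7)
The plan is to split on the number of convex layers of $S$ and, in the nontrivial case, reduce to Theorem~\ref{thm:suffix-good}. If $\layernumber(S) = 1$ then $S$ is in convex position and the claim is immediate from Theorem~\ref{lem:convex-fixed-flipgraph}. Assume henceforth that $\layernumber(S) = 2$, with outer layer $\layer_0$ and inner layer $\layer_1$. Since $\suffixgoodpaths(S,s) \subseteq \paths(S,s)$ and the former induces a connected subgraph by Theorem~\ref{thm:suffix-good}, it suffices to show that every $P \in \paths(S,s)$ can be flipped inside $\paths(S,s)$ to some path in $\suffixgoodpaths(S,s)$.

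Two structural facts guide the construction. First, any flip preserving the starting vertex $s$ is a \emph{suffix reversal}: removing an edge $v_iv_{i+1}$ leaves the subpaths $v_0\ldots v_i$ and $v_{i+1}\ldots v_{n-1}$, and the only reconnecting edge keeping $s=v_0$ as an endpoint is $v_iv_{n-1}$, producing $v_0\ldots v_i v_{n-1} v_{n-2}\ldots v_{i+1}$. Second, because $s\in \layer_0$, every suffix whose first vertex lies in $\layer_0$ is automatically \suffixgood, since $\layer_0$-vertices lie on $\partial \conv(S)$ and therefore on the hull of every subset containing them; hence only $\layer_1$-starting suffixes can be bad, and such suffixes necessarily have length at least four.

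I would then run a monovariant argument. Writing $P = v_0 v_1 \ldots v_{n-1}$, let $i(P)$ be the largest index $i$ such that $v_i \in \layer_1$ lies strictly inside $\conv\{v_i,\ldots,v_{n-1}\}$, and $i(P) = -\infty$ if no such index exists. For $i(P) \geq 0$, the aim is to construct a suffix-reversal flip (or short sequence of them) producing $P'$ with $i(P') < i(P)$, iterating until no bad suffix remains. The target construction exploits the two-layer geometry: by maximality of $i$, the suffix at position $i+1$ is good, so $v_{i+1} \in \partial\conv\{v_{i+1},\ldots,v_{n-1}\}$, while $v_i$ lies strictly inside $\conv\{v_i,\ldots,v_{n-1}\}$. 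I would then identify a hull-vertex $w$ of the bad suffix reachable from $v_i$ by a non-crossing chord and reverse an appropriate suffix so that $w$ lands at position $i$, repairing the bad suffix. The main obstacle is verifying that such a non-crossing repair chord always exists and that the reversal does not create a new bad suffix strictly to the right of index $i$. The two-layer hypothesis is essential: interior points are confined to the single nested polygon $\conv(\layer_1)$, giving enough control over the hull of $\{v_i,\ldots,v_{n-1}\}$ to locate a valid flip; I expect to complete this step using the fixed-endpoint flip tools the paper develops in its preceding sections.
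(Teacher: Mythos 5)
Your overall strategy coincides with the paper's: handle $\layernumber(S)=1$ by \cref{lem:convex-fixed-flipgraph}, and for two layers reduce to \cref{thm:suffix-good} by arguing that every path in $\paths(S,s)$ can be flipped to a \suffixgood one (your observation that only inner-layer vertices can violate suffix-independence is correct, as is the suffix-reversal description of fixed-start flips). However, the entire technical content of the theorem lies exactly in the step you defer: showing that a ``repair'' flip sequence exists and that it makes progress. As you note yourself, a flip can only introduce an edge at the current end, so making some hull vertex $w$ of the bad suffix adjacent to $v_i$ first requires reconfiguring the whole suffix after $v_i$ so that it ends at a point visible from $v_i$. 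That suffix lives on an arbitrary subset of a two-layer set (so it may itself have two layers), and its reconfiguration is obstructed by prefix edges -- precisely the statement you are trying to prove, so the argument is circular unless it is set up as a careful induction. Moreover, visibility from $v_i$ (or from the current end) to the relevant target can be blocked by chords of $\layer_0$ or by inter-layer \emph{cutting} edges of the prefix; handling these obstructions is not a technicality but the bulk of the proof. Finally, your monovariant $i(P)$ is not shown to be monotone under any concrete flip: a suffix reversal at an index $j<i$ replaces the suffix set of every point $v_{j+1},\dots,v_{i-1}$ by a completely different set, so new bad indices to the right can appear, and you give no mechanism ruling this out.

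For comparison, the paper resolves these difficulties with different and more robust potential functions and an induction on $|S|$: it first reduces to paths in $\chordfreepaths(S,s)$ (no chords, first edge inward), using \cref{lem:pseudochord,obs:chordfree} and induction on independent suffixes; it then proves a ``$2$-layer property'' (\cref{thm:l0-level-increase}) stating that such a path can always be flipped so that the number $\leveledges_0$ of $\layer_0$-level edges increases, which eventually forces a level first edge and hence an independent suffix. The proof of that property needs \cref{lem:convex-region-chord-flip} (exploiting an empty convex region between an independent suffix and a preceding edge) to pull the end of the path back out to $\layer_0$, and a separate argument (\cref{lem:cutting-edge-removal}) with a second monovariant, the number $\cuttingedges$ of cutting edges, to eliminate blocking inter-layer edges. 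None of this machinery is replaced by anything in your sketch, so the proposal has a genuine gap at its central step.
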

We then use the insights from~\cref{thm:2conv} to settle~\Cref{conj:conn} for point sets with at most two convex layers.

\begin{restatable}{theorem}{thmFlipGraphConnected}
    \label{cor:2conv}
    Let $S$ be a point set in general position with $\layernumber(S)\leq 2$.
    Then the flip graph of~$\paths(S)$ is connected.
\end{restatable}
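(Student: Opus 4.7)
The plan is to reduce \Cref{cor:2conv} to \Cref{thm:2conv}. For each hull vertex $s\in\layer_0$, \Cref{thm:2conv} asserts that $\paths(S,s)$, the set of plane spanning paths having $s$ as an endpoint, is connected in the flip graph. When $\layernumber(S)=1$, every path has both endpoints in $\layer_0=S$, so it suffices to (a) link the subgraphs $\paths(S,s)$ for different $s\in\layer_0$ into a single flip-component. When $\layernumber(S)=2$, we additionally need (b) that every path with both endpoints in the inner layer $\layer_1$ is flip-connected to some path with an endpoint in $\layer_0$.

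For (a), I would show that for every pair $s,s'\in\layer_0$ there is a plane spanning path on $S$ with $s$ and $s'$ as its endpoints. Such a path lies in both $\paths(S,s)$ and $\paths(S,s')$, so \Cref{thm:2conv} places these two sets in the same flip-component. For convex position this is immediate (traverse the hull from $s$ to $s'$); for two layers, walk from $s$ along $\layer_0$, at a convenient hull edge detour into $\layer_1$ to visit its points in angular order around a pivot on $\layer_0$, return to the hull, and continue to $s'$, with planarity guaranteed by the two-layer structure. Iterating, $\bigcup_{s\in\layer_0}\paths(S,s)$ lies in a single flip-component~$C$.

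For (b), let $P=v_1v_2\ldots v_n$ with $v_1,v_n\in\layer_1$. The natural tool is a rotation flip at the endpoint $v_1$: for some $j\geq 2$, inserting the chord $v_1v_{j+1}$ and deleting the edge $v_jv_{j+1}$ yields a new plane spanning path with endpoint $v_j$, provided the chord crosses no edge of $P$. Since $v_1\in\layer_1$ is strictly inside $\conv(\layer_0)$, the hull points lie all around $v_1$, so scanning the candidates $j$ in angular order about $v_1$ one can aim for some $v_j\in\layer_0$ whose chord is unobstructed. The resulting flipped path has an endpoint in $\layer_0$ and hence lies in~$C$.

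The main obstacle is step (b): existing edges of $P$ may block every simple rotation candidate, so a direct one-flip argument need not always succeed. The envisaged resolution uses the 2-layer hypothesis crucially: since every vertex of $P$ lies in $\layer_0$ or $\layer_1$ and $\layer_1\subset\conv(\layer_0)$, the sectors around $v_1$ that are blocked by path edges cannot hide all of $\layer_0$ simultaneously. When a single rotation does not suffice, rotations are combined with auxiliary reconfigurations drawn from the toolkit developed for \Cref{thm:2conv} (and, if needed, \Cref{thm:suffix-good}) to reshape $P$ step by step until a rotation landing on $\layer_0$ becomes available.
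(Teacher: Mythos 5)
Your overall architecture coincides with the paper's: reduce to \cref{thm:2conv}, merge the components $\paths(S,s)$ over all $s\in \layer_0$ by exhibiting a common path with both endpoints on $\layer_0$, and then show that every path with both endpoints in $\layer_1$ reaches a path with an endpoint in $\layer_0$. Your step (a) is essentially the paper's first step; the paper obtains the required $ss'$-paths directly from \cref{lem:suffix-good-reversible-path} rather than by an ad hoc two-layer construction, which spares you the planarity argument you only sketch.

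Step (b), however, contains a genuine gap, and it is exactly where the real work of the theorem lies. Two problems. First, your geometric claim that the blocked sectors around the inner endpoint ``cannot hide all of $\layer_0$ simultaneously'' is both unproven and insufficient: even if some point of $\layer_0$ is visible from the endpoint, a single rotation only lands on $\layer_0$ if the endpoint sees a point whose \emph{path-neighbor} (on the endpoint's side) lies in $\layer_0$, or sees the other endpoint so that the cycle can be closed and broken at an edge incident to an \out point; mere visibility of a hull point gives nothing. Second, when every such flip is obstructed you simply defer to ``auxiliary reconfigurations drawn from the toolkit,'' but that is precisely the content that must be supplied. The paper resolves the obstructed case by classifying the first obstructing edge $uv$ on the sightline: if it belongs to the $\layer_1$-suffix $X$, the suffix is re-flipped via \cref{lem:convex-fixed-flipgraph} so that it no longer obstructs; if both $u$ and $v$ are visible from the endpoint, \cref{lem:convex-region-chord-flip} is applied to $X$ and $uv$; if $uv$ is a chord, the \good suffix $P[u,t]$ is re-flipped using \cref{thm:2conv} itself; and if $uv$ is an inward cutting edge, a dedicated argument (as in \cref{lem:cutting-edge-removal}) flips to a path with strictly fewer cutting edges, which guarantees termination of the whole procedure. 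Without this case analysis and, in particular, without a decreasing potential such as the number of cutting edges, your ``reshape $P$ step by step'' has no termination or correctness argument, so the proposal does not yet constitute a proof of the inner-endpoint case.
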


\subsection{Related work}
\label{subsec:related-work}
Flip graphs of non-crossing configurations on point sets in the plane are studied in various settings, e.g., for triangulations~\cite{Eppstein10,FelsnerKMS20,HouleHNR05,HurtadoNU99,Lawson72,LubiwP15,WagnerW22}, spanning paths~\cite{2023Aicholzer,akl2007planar,convexDiameter,hamilton}, spanning trees~\cite{AichholzerAH02,aichholzer2022reconfiguration,AvisF96,trees,bousquet2024reconfigurationSoCG,bousquet2023note,HernandoHMMN99,NicholsPTZ20}, Hamiltonian cycles~\cite{HernandoHH02}, and matchings~\cite{perfect-matchings,HouleHNR05,MilichMP21}.

In the following, we review the state of the art for the two most related settings, namely of plane spanning paths and plane spanning trees.

\subparagraph*{Plane spanning paths.}
While not much is known for the flip graph of plane spanning paths on general point sets, the case of convex point sets is pretty well understood.
Akl et al.~\cite{akl2007planar} show that the flip graph is connected and Chang and Wu~\cite{convexDiameter} provide tight bounds for the diameter, i.e., for a convex $n$-set, its flip graph has diameter $2n-5$ for $n=3,4$ and~$2n-6$ for all $n\geq5$.
Rivera-Campo and Urrutia-Galicia~\cite{hamilton} show Hamiltonicity of the flip graph, and its chromatic number is shown to be $n$ by Fabila-Monroy et al.~\cite{chromatic}.

For more general point sets, Akl et al.~\cite{akl2007planar} show connectivity for all small point sets with up to $8$~points in general position.
Aichholzer et al.~\cite{2023Aicholzer} confirm \Cref{conj:conn} for generalized double circles; a point set is a generalized double circle if it allows for a Hamiltonian cycle that intersects no segment between any two points.
While the obtained upper bound on the diameter is quadratic, they provide a linear upper bound of~$2n-4$ on the diameter for the special case of wheel sets of size $n$.

\subparagraph*{Plane spanning trees.}
Avis and Fukuda~\cite{AvisF96} show that the diameter of the flip graph of plane spanning trees is upper bounded by $2n-4$ for $n$ points in general position.
The best known lower bound of~$1.5 n-5$ due to Hernando et al.~\cite{HernandoHMMN99} even holds for the special case of convex point sets.
If one tree is a monotone path, Aichholzer et al.~\cite{aichholzer2022reconfiguration} show that it can be reconfigured into any other tree by $1.5n-2$ flips.

For points in convex position, the upper bound on the diameter has been improved several times.
Bousquet et al.~\cite{bousquet2023note} obtain a bound of~$2n-\Omega(\sqrt n)$ and conjecture $1.5n$ to be the tight bound.
Aichholzer et al.~\cite{aichholzer2022reconfiguration} bound the distance of two given trees $T_1$ and~$T_2$ on convex point sets by $2|T_1\setminus T_2|-\Omega(\log(|T_1\setminus T_2|))$.
Bousquet et al.~\cite{bousquet2024reconfigurationSoCG} break the barrier of 2 in the leading coefficient and show that there always exists a flip sequence between any pair of non-crossing spanning trees $T_1$ and~$T_2$ of length at most $c\,|T_1\setminus T_2|$ where $c\approx 1.95$.
As a lower bound in terms of the symmetric difference, they provide a pair of trees such that a minimal flip sequence has length $\nicefrac{5}{3}\,|T_1\setminus T_2|$.
Most recently, Bjerkevik et al.~\cite{trees} show a lower bound of~${\nicefrac{14}{9}\cdot n - \BigO(1)}$ and an upper bound of $\nicefrac{5}{3}\cdot n - \BigO(1)$.
\section{Fundamental concepts and tools}
\label{sec:defs}

Let $S$ be a set of~$n$ points in general position, i.e., no three points in $S$ are collinear.
A \emph{plane spanning path} on $S$ is a path {that visits each point of~$S$ exactly once}.
{Every edge of a path} is a straight-line segment between a pair of points, and no two edges intersect except at a common endpoint.
We write $uv$ or $vu$ for an edge between $u,v\in S$.
For any path $P$, the points of degree~$1$ are the \emph{endpoints} of~$P$.
We denote the set of all plane spanning paths of~$S$ by~$\paths(S)$.
A \emph{flip} on a path $P\in \paths(S)$ replaces an edge $e$ of~$P$ with another edge $f$ on $S$ to obtain a new path~$P'\in \paths(S)$; for an example, see~\Cref{fig:flip}.
The flip graph of~$\paths(S)$ contains a vertex for each path in~$\paths(S)$ and an edge between any two paths that can be transformed into one another by a flip.
Similarly, for each~${\mathcal P'\subset\paths(S)}$, the flip graph of~$\mathcal P'$ refers to the subgraph induced by $\mathcal P'$.

\subparagraph*{Paths with a fixed start.}
We focus in particular on plane spanning paths with a fixed endpoint.
For every $s\in S$, we denote the set of all paths with $s$ as an endpoint by $\paths(S,s)$.
\cref{fig:flipgraph-33,fig:flipgraph-c4,fig:flipgraph-c5} illustrate examples of induced flip graphs.
Typically, we consider an $st$-path of~$\paths(S,s)$ as oriented from $s$ to $t$, and call~$s$ the \emph{start} and~$t$ the \emph{end}.
{We illustrate these as black and white squares, respectively.}
It is easy to see that all edges that can be introduced by a flip are incident to $t$, see~\cref{fig:directed-edge-flip}.

\begin{figure}[htbp]
    \centering
    \includegraphics[page=1]{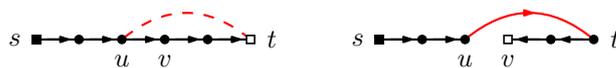}
    \caption{{A flip on a path with a fixed endpoint, illustrating~\cref{lem:fixed-flips}.}}
    \label{fig:directed-edge-flip}
\end{figure}

\begin{observation}
    \label{lem:fixed-flips}
    An $st$-path $P\in\paths(S,s)$ can be flipped to an $st'$-path ${P'\in\paths(S,s)}$ if and only if some edge $uv$ is replaced by $ut$,
    {i.e., an edge incident to $t$}.
\end{observation}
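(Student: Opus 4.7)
The plan is to analyze the structural effect of a flip by asking how deleting an edge splits $P$ into two subpaths, and then to use the constraint that $s$ must remain an endpoint of $P'$ in order to pin down the replacement edge.

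I would start by writing the flip as deleting an edge $e=uv$ of $P$, with $u$ on the $s$-side, and adding an edge $f$. Removing $e$ splits $P$ into a subpath $P_s$ containing $s$ with endpoints $\{s,u\}$ (where $u=s$ if $e$ is the first edge of $P$) and a subpath $P_t$ containing $t$ with endpoints $\{v,t\}$ (where $v=t$ if $e$ is the last edge). For $P'$ to be a spanning path at all, $f$ must join an endpoint of $P_s$ to an endpoint of $P_t$, leaving $f\in\{sv,st,ut\}$ once $f=uv=e$ is excluded.

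For the forward direction, I would argue that $P'\in\paths(S,s)$ forces $f=ut$. When $u\neq s$, the vertex $s$ already has degree one in $P_s$ via its edge to the second vertex of $P$; adding $sv$ or $st$ would raise its degree to two, so $s$ would no longer be an endpoint of $P'$, a contradiction. Hence $f=ut$, an edge incident to the old endpoint $t$. The degenerate case $u=s$ absorbs uniformly: then $P_s=\{s\}$ and the only non-trivial choice is $f=st=ut$.

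For the converse, I would observe that replacing $uv$ by $ut$ (whenever the result is a plane spanning path) gives a path $P'$ that lies in $\paths(S,s)$: concatenating $P_s$ from $s$ to $u$, the new edge $ut$, and the reverse of $P_t$ from $t$ to $v$ yields an $sv$-path, so $s$ remains the start and the end shifts from $t$ to $v$. There is no real obstacle here; the only care needed is the bookkeeping of the degenerate cases $u=s$ or $v=t$, where one subpath is a single vertex, but both fold naturally into the same statement.
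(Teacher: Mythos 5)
Your argument is correct, and it is exactly the reasoning the paper leaves implicit: the paper presents this as an observation with only a figure and a one-line remark that the inserted edge must be incident to $t$, so there is no written proof to compare against. Your decomposition into $P_s$ and $P_t$, the enumeration of the three candidate reconnecting edges $\{sv,st,ut\}$, and the degree argument at $s$ are precisely the natural way to fill in that "easy to see." The only cosmetic remark is that in the degenerate case $v=t$ your forward direction concludes $f=ut=e$, which is excluded, so that case yields no valid flip at all rather than folding into the same conclusion; this does not affect correctness, since the biconditional only constrains flips that actually exist.
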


For two points $u,v\in S$ and a path $P$, we say  that~$u$ \emph{sees} $v$ if the straight-line segment~$uv$ does not intersect any of the edges of~$P$ except at a common endpoint.
Clearly, this relationship is symmetric, i.e., $u$ sees $v$ exactly if $v$ sees~$u$.
Then, \cref{lem:fixed-flips} implies that the degree of each path in the flip graph of~$\paths(S,s)$ is equal to the number of points that are visible from the endpoint of the path.

We say that a directed edge $uv$ is \emph{outgoing} for $u$ and \emph{incoming} for $v$.
For any directed path $P$ which contains the edge $uv$, we also say that~$u$ is the \emph{predecessor} of~$v$ on $P$, and~$v$ the \emph{successor} of~$u$ on $P$.
Moreover, let $\overleftarrow P$ denote the \emph{reversed} path of~$P$, i.e., $vu\in\overleftarrow P$ if and only if $uv\in P$.

\subparagraph*{Convex layers.}
We denote the convex hull of~$S$ by $\conv(S)$ and say that a point~${s\in S}$ is \emph{\out} if it lies on the boundary of the convex hull of~$S$.
A point that is not \out is \emph{inner}.
The \emph{layers} of~$S$ are defined as follows, see \cref{fig:convexlayer}:~${\layer_0\subseteq S}$ consists of all \out points.
For~$i\geq 1$, $\layer_i$ is the set of points that are \out in~$S\setminus \bigcup_{k<i} \layer_k$.
The number of non-empty layers in $S$ is the \emph{layer number}~$\layernumber(S)$.

\begin{figure}[htb]
    \begin{subfigure}{0.33\textwidth}%
        \centering
        \includegraphics[page=1]{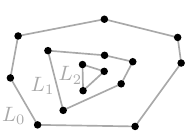}
        \subcaption{}
        \label{fig:convexlayer}
    \end{subfigure}%
    \begin{subfigure}{0.33\textwidth}%
        \centering
        \includegraphics[page=2]{figures/definitions}
        \subcaption{}
        \label{fig:edge-types}
    \end{subfigure}%
    \begin{subfigure}{0.33\textwidth}%
        \centering
        \includegraphics[page=3]{figures/definitions}
        \subcaption{}
        \label{fig:canonical-spiral}
    \end{subfigure}%
    \caption{A point set with (a) its layers, (b)  a chord, an inward cutting edge, and a level edge (from left to right), and (c) its counterclockwise spiral from $s\in \layer_0$.}
    \label{fig:definitions}
\end{figure}

We speak of an \emph{$\layer_i$-edge} if both endpoints are in~$\layer_i$.
For a convex point set, two points are \emph{adjacent} if they appear consecutively on the boundary of its convex hull.
We distinguish certain types of edges, as illustrated in~\cref{fig:edge-types}.
If $u$ and~$v$ are adjacent in~$\layer_i$, we say that~$uv$ is a \emph{level edge} of~$\layer_i$.
A \emph{chord} is then an $\layer_0$-edge that is not level.
Further, an edge $uv$ with $u\in \layer_i $ and~$ v\in \layer_j$ such that $i\neq j$ is called an \emph{inter-layer edge}; we call it  \emph{inward} if~${i<j}$, and \emph{outward} if $i>j$.
Finally, we say that an inter-layer edge is \emph{cutting}, if it crosses any level edge in~$S$.
For a path~$P\in\paths(S)$, we denote by $\leveledges_i(P)$ the number of level edges of~$\layer_i$ in $P$, and by $\cuttingedges(P)$ the number of cutting edges in $P$.

\subparagraph*{Suffix-independence.}
For a given $st$-path $P\in\paths(S,s)$, let $P[a,b]$ refer to the subpath of~$P$ from $a$ to $b$.
We say that~$P[a,b]$ is a \emph{prefix} of  $P$ if $a=s$, and a \emph{suffix} of~$P$ if $b=t$.
A~suffix~${P[a,t]}$ of $P$ is called \emph{\good}, if the convex hull of $P[a,t]$ does not contain any points of the prefix~$P[s,a]$.
By definition, $a$ is then \out with respect to the points in $P[a,t]$.
We say that a path $P\in\paths(S,s)$ is \emph{\suffixgood} if each suffix of~$P$ is \good, i.e., if every~$u\in S$ is \out with respect to the points in $P[u,t]$.
By definition, every suffix of a \suffixgood path is \suffixgood.
We denote by $\suffixgoodpaths(S,s)$ the set of all \suffixgood paths of~$\paths(S,s)$.

Note that the \suffixgood paths contain all paths that are monotone, radially monotone, or spirals.
We now present a sufficient criterion to identify a family of \suffixgood paths, the \emph{layer-monotone} paths.
A path on $S$ is layer-monotone if it visits the layers of~$S$ in increasing order, i.e., for all~${i\in[0,\layernumber(S)-1]}$, all points in layer $\layer_i$ are visited before any point in~$\layer_{i+1}$.

\begin{lemma}
    \label{lem:layer-monotone-paths}
    Let $S$ be a point set in general position, $s\in S$, and~${P\in\paths(S,s)}$.
    \begin{enumerate}[(1)]
        \item If $P$ is layer-monotone, then $P$  is \suffixgood.
        \item If $P$ is not \suffixgood, then it contains an outward edge.
    \end{enumerate}
\end{lemma}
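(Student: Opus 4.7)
\medskip

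\noindent\textbf{Proof plan for \cref{lem:layer-monotone-paths}.}

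My plan is to derive both parts from the single structural observation that, along any path without outward edges, layer indices are non-decreasing, so every suffix lies in an ``inner tail'' of the convex layer decomposition. Since a layer-monotone path visits $\layer_i$ entirely before $\layer_{i+1}$, it in particular contains no edge going from a deeper layer to a shallower one; hence (1) follows immediately from (2). I would therefore focus on (2), proving the contrapositive: if $P \in \paths(S,s)$ contains no outward edge, then $P$ is \suffixgood.

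Fix an arbitrary suffix $P[a,t]$ and let $j$ be the layer index with $a \in \layer_j$. Walking along $P[a,t]$ edge by edge, the non-outward assumption yields that the layer index of the successor is at least that of the predecessor, so a straightforward induction along the suffix shows that every vertex of $P[a,t]$ lies in $\bigcup_{k \geq j} \layer_k$. Consequently
\[
\conv(P[a,t]) \;\subseteq\; \conv\!\Bigl(\bigcup\nolimits_{k \geq j} \layer_k\Bigr).
\]

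It remains to verify that no point $p$ of the strict prefix $P[s,a) := P[s,a]\setminus\{a\}$ lies in $\conv(P[a,t])$. I would split into two cases based on the layer of $p$. If $p \in \layer_k$ with $k < j$, then by the defining extremality property of convex layers, $p$ is an extreme point of $\bigcup_{k' \geq k}\layer_{k'}$ and hence does not lie in $\conv(\bigcup_{k' \geq k+1}\layer_{k'})$, which contains $\conv(\bigcup_{k' \geq j}\layer_{k'}) \supseteq \conv(P[a,t])$. If instead $p \in \layer_j$, then $p \notin P[a,t]$, so $P[a,t] \subseteq (\bigcup_{k \geq j}\layer_k)\setminus\{p\}$, and the extremality of $p$ as a vertex of $\conv(\bigcup_{k \geq j}\layer_k)$ again gives $p \notin \conv(P[a,t])$. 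In both cases $P[a,t]$ is \good, and since the suffix was arbitrary, $P$ is \suffixgood.

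The argument is essentially bookkeeping on the convex layer structure, and I expect no genuine obstacle. The only subtle point is the second case above: one must be careful that $p \in \layer_j$ being extreme inside $\bigcup_{k \geq j}\layer_k$ only rules out $p$ being a convex combination of the \emph{other} points of this set, which is exactly what we need because the suffix excludes $p$ by its position on $P$. Part (1) then drops out at no extra cost, since layer-monotonicity is a stronger structural condition than the absence of outward edges used in (2).
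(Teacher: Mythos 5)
Your proof is correct and rests on the same two ingredients as the paper's own argument: monotonicity of layer indices along the path and extremality of each point of $\layer_k$ within $\bigcup_{k'\geq k}\layer_{k'}$. The only difference is the direction of derivation between the two parts --- the paper proves (1) directly and obtains (2) as an immediate consequence, whereas you prove the contrapositive of (2) directly and deduce (1) from it, which amounts to the same thing since for spanning paths layer-monotonicity coincides with the absence of outward edges.
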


\begin{proof}
    We show both statements.\\
    \descriptionlabel{(1)} Consider a suffix $P'$ of~$P$ that starts in $s'\in \layer_i$ for some $i$.
        Then, $P'$ visits all points from $\bigcup_{j>i} \layer_j$ and a subset $M$ of~$\layer_i$.
        Clearly, $s$ lies on the boundary of~$\bigcup_{j\geq i} \layer_j$ and thus in particular on the subset  $\bigcup_{j> i} \layer_j\cup M$.
        Moreover, no point in $\bigcup_{j\leq i} \layer_j\setminus M$ lies in the interior of~$\conv(P')$.
        Thus $P'$ is \good.\\
   \descriptionlabel{(2)} If $P$ is not \suffixgood, then by (a) it is not layer-monotone and thus has an outward edge.
\end{proof}

A special type of layer-monotone path, the~\emph{clockwise spiral} of~$S$ from an \out point $s$
is the path $s_1,\ldots, s_n$ such that~$s_1=s$ and~$s_{i+1}$ is an \out point of~${S\setminus\{s_1, \dots, s_{i-1}\}}$ which lies clockwise adjacent to $s_i$.
\Cref{fig:canonical-spiral} depicts a \emph{counterclockwise spiral}, {which can be constructed analogously}.
Finally, we define the \emph{$\layer_i$-suffix} of a path as the maximal suffix that contains only points on layers~$\layer_{j\geq i}$.

\subparagraph*{Useful properties.}
We present two useful properties for paths starting and ending in an \out point.
Two points $u,v\in S$ divide a layer ${\layer_{i}\setminus\{u,v\}}$ into two sets by the line through~$u$ and~$v$;
for a reference point {$w\in\layer_i$}, {we denote them by $\layer_i^+(u,v;w)$ and~$\layer_i^-(u,v;w)$} such that  $w\in \layer_i^+(u,v;w)$, see \cref{fig:layer-cuts}.

\begin{figure}[htb]
    \centering
    \includegraphics[page=1]{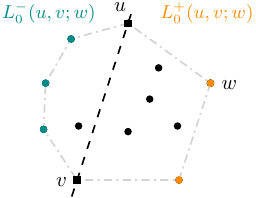}
    \caption{A pair of points divides each layer into $\layer_0^+(u,v;w)$ and~$\layer_0^-(u,v;w)$.}
    \label{fig:layer-cuts}
\end{figure}
By planarity of the paths, we obtain the following statement.
\begin{lemma}
    \label{lem:pseudochord}
    If an $st$-path $P$ starts and ends on $\layer_0$, then the prefix $P[s,v]$ for every ${v\in \layer_0 \setminus \{s,t\}}$ visits all points in~${\layer_0^-(s,v;t)}$.
\end{lemma}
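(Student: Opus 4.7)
The plan is to prove the lemma by induction on $k := |\layer_0^-(s,v;t)|$, with the inductive step resting on a planarity argument via the Jordan curve theorem. The base case $k = 0$ is vacuous. For the inductive step, let $o^-$ denote the \out point adjacent to $v$ on the arc of $\partial\conv(S)$ from $s$ to $v$ not containing $t$, so that $o^- \in \layer_0^-(s,v;t)$. It suffices to prove $o^- \in P[s,v]$: then $P[s,o^-]$ is a prefix of $P[s,v]$, and the induction hypothesis applied to $v' := o^-$ (noting that $|\layer_0^-(s,o^-;t)| = k-1$) yields the remaining points in $\layer_0^-(s,v;t) \setminus \{o^-\}$.

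To establish $o^- \in P[s,v]$, I argue by contradiction: suppose $o^- \in P[v,t]$. The key construction is the simple closed curve $C := P[v,o^-] \cup vo^-$, obtained by closing the subpath from $v$ to $o^-$ with the level edge $vo^-$. Since $v$ and $o^-$ are adjacent on $\partial\conv(S)$, the interior of $vo^-$ contains no points of $S$, so the two arcs composing $C$ share only their common endpoints and $C$ is simple. By Jordan's theorem, $C$ bounds a region $R \subset \conv(S)$.

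The crucial geometric observation is that $s$ and $t$ lie in $\Omega := \conv(S) \setminus R$: both are \out points distinct from $v$ and $o^-$, so they lie on $\partial\conv(S) \setminus vo^- \subset \partial\Omega$ and cannot lie in the interior of $R$ (which does not touch $\partial\conv(S)$ outside of $vo^-$). In the disk $\overline{\Omega}$, the four points $v, t, s, o^-$ appear on $\partial\Omega$ in this cyclic order, so $P[s,v]$ and $P[o^-,t]$ are vertex-disjoint simple arcs in $\overline{\Omega}$ whose endpoints alternate on $\partial\Omega$. A standard Jordan-type lemma --- two vertex-disjoint simple arcs in a disk with alternating boundary endpoints must cross --- then forces $P[s,v]$ and $P[o^-,t]$ to intersect, contradicting the planarity of $P$.

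The main obstacle is the topological bookkeeping: verifying that $C$ is simple and that $s, t \in \Omega$ both rely essentially on the choice of $o^-$ as the neighbor of $v$ on the $-$-arc, which makes $vo^-$ a level edge whose interior contains no points of $S$ and thereby pins the enclosed region $R$ to a small neighborhood of $vo^-$ inside $\conv(S)$. Every other \out point --- in particular $s$ and $t$ --- must then lie in $\Omega$, supplying the four alternating endpoints needed to invoke the Jordan-crossing argument. Once this setup is in place, the rest of the proof is essentially forced.
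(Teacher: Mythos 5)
Your proof is correct, but it takes a genuinely different --- and more elaborate --- route than the paper's. The paper argues directly: supposing some $x \in \layer_0^-(s,v;t)$ lies on $P[v,t]$, it forms a Jordan curve $\gamma$ from $P[s,v]$ together with an arc from $v$ to $s$ that stays strictly outside $\conv(S)$; $\gamma$ then separates $x$ from $t$, so $P[x,t]$ must cross $P[s,v]$, contradicting planarity. You instead induct on $k = |\layer_0^-(s,v;t)|$ and consider only the neighbour $o^-$ of $v$; the purpose of the reduction is that $vo^-$ is a convex-hull edge, so $C = P[v,o^-] \cup vo^-$ is a simple closed curve inside $\conv(S)$ and $\Omega = \conv(S)\setminus R$ is a disk on whose boundary the alternating-arcs argument can be run. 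The paper's closing arc lives outside $\conv(S)$ and therefore yields a simple Jordan curve for \emph{any} choice of $x$, which is why no induction or adjacency assumption is needed there; both arguments ultimately rest on the same Jordan-type separation fact. One edge case to patch in yours: if $o^-$ happens to be the immediate successor of $v$ on $P$, then $P[v,o^-]$ coincides with $vo^-$ and $C$ degenerates to a doubled segment, so it is not a simple closed curve; in that case the alternating-arcs argument should simply be applied directly in $\conv(S)$ (i.e.\ with $R=\emptyset$, $\Omega=\conv(S)$), which still delivers the contradiction.
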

\begin{proof}
    Suppose for a contradiction that~$P[v,t]$ visits a point $x\in \layer_0^-(s,v;t)$.
    Let~$\gamma$ denote the closed curve formed by $P[s,v]$, and {a Jordan curve from $v$ to~$s$ that intersects convex hull of~$P$ only in these two points}, see \cref{fig:usefulB,fig:usefulC}.
    Then~$\gamma$ separates $x$ from $t$ and thus, the Jordan curve theorem implies that~$P[x,t]$ intersects $P[s,v]$, a contradiction.
\end{proof}
\begin{figure}[htb]
    \begin{subfigure}[t]{0.48\textwidth}%
        \centering
        \includegraphics[page=2]{figures/layer-cuts}
        \subcaption{The prefix~$P[s,v]$ visits all points in $\layer_0^-(s,v;t)$.}
        \label{fig:usefulB}
    \end{subfigure}%
    \hfill%
    \begin{subfigure}[t]{0.48\textwidth}%
        \centering
        \includegraphics[page=3]{figures/layer-cuts}
        \subcaption{The closed curve $\gamma$ separates $x$ and $t$.}
        \label{fig:usefulC}
    \end{subfigure}%
    \caption{Illustration for \cref{lem:pseudochord}.}
    \label{fig:useful}
\end{figure}

This can be further extended to prove the following.
\begin{lemma}
    \label{obs:chordfree}
    If a path $P$ starts and ends in two adjacent points on $\layer_0$, then it is chord-free.
\end{lemma}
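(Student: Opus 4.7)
The plan is to argue by contradiction: suppose that $P$ contains a chord $uv$, so that $u,v\in\layer_0$ are non-adjacent on $\layer_0$.

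First, I would extend $P$ to a plane Hamiltonian cycle on $S$. Since $s$ and $t$ are adjacent on $\layer_0$, the segment $st$ lies on the boundary of $\conv(S)$, and all remaining points of $S$ lie in a common open half-plane bounded by the line through $s$ and $t$. Consequently, no edge of $P$ can cross the interior of segment $st$, so $C\coloneqq P\cup\{st\}$ is a plane Hamiltonian cycle on $S$.

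Next, I would delete the edge $uv$ from $C$ to obtain a plane Hamiltonian path $P'$ from $u$ to $v$, contained in $\conv(S)$. Because $u$ and $v$ are non-adjacent on $\layer_0$, both open arcs of $\layer_0\setminus\{u,v\}$ are non-empty, and hence each of the two open half-planes bounded by the line through $u$ and $v$ contains at least one point of $S\setminus\{u,v\}$. Since $P'$ starts at $u$, ends at $v$, and visits vertices strictly on both sides of line $uv$, some edge of $P'$ with neither endpoint in $\{u,v\}$ must have its endpoints on opposite sides of line $uv$ and thus cross line $uv$. As $P'$ lies in $\conv(S)$ and line $uv$ meets $\conv(S)$ only in the segment $uv$, this crossing lies in the interior of $uv$; but $uv$ itself is an edge of $C$, so $C$ would contain two crossing edges, contradicting its planarity.

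The main subtlety will be to ensure that the crossing occurs strictly inside the open segment $uv$ and not at one of its endpoints. This is guaranteed because the points of $S\setminus\{u,v\}$ lie strictly off line $uv$ by general position, so any side-swap among the internal vertices of $P'$ yields an edge whose endpoints are both off line $uv$ and on opposite sides of it. Overall, the argument extends the Jordan-curve reasoning used to prove \cref{lem:pseudochord} to the closed curve obtained from $P$ by adjoining the level edge $st$.
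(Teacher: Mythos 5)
Your proof is correct, but it takes a genuinely different route from the paper's. The paper argues directly on $P$ via \cref{lem:pseudochord}: writing the chord as $uv$, the prefix of $P$ up to the later of the two chord endpoints must visit every point of $\layer_0$ on the far side of the chord, while the prefix up to the earlier endpoint can visit none of them without crossing $uv$; since the far side is non-empty for a chord, this is an immediate contradiction. You instead close $P$ into a plane Hamiltonian cycle $C=P\cup\{st\}$ (valid because $st$ is a hull edge, so no other edge can meet its interior), delete the chord to obtain a Hamiltonian $uv$-path, and run a discrete intermediate-value argument: the interior vertices of that path lie strictly on both sides of the line through $u$ and $v$, so some edge of $C$ avoiding $u$ and $v$ crosses that line, and because the line meets $\conv(S)$ exactly in the segment $uv$ (as $u,v$ are hull vertices) while general position rules out a crossing at $u$ or $v$ itself, this edge crosses the open segment $uv$, contradicting planarity of $C$. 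Your argument is self-contained --- it does not invoke \cref{lem:pseudochord} --- and it handles chords incident to $s$ or $t$ without any case distinction, a point the paper's phrasing ($s,t\in\layer_0^+(u,v;t)$) glosses over; the paper's proof is shorter because it reuses the prefix-visiting lemma it has already established. Both ultimately rest on the same Jordan-curve intuition, as you note.
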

\begin{proof}
    Suppose an $st$-path $P$ contains a chord $uv$ and~$s,t\in \layer_0^+(u,v;t)$.
    According to \cref{lem:pseudochord}, $P[s,v]$ visits all points in $\layer_0^-(u,v;t)$.
    However, $P[s,u]$ does not visit any point in~${\layer_0^-(u,v;t)}$; otherwise it intersects $uv$, a contradiction.
\end{proof}
\section{Plane paths on point sets in convex position}
\label{sec:point-sets-in-convex-position}
In this section, we consider plane spanning paths on point sets in convex position, and present tight bounds for the radius and diameter of the flip graph. \cref{fig:flipgraph-c4,fig:flipgraph-c5} depict the flip graphs for $n=4$ and~$n=5$, respectively.

\begin{figure}[htb]
    \begin{subfigure}[t]{0.5\textwidth}
        \centering
        \includegraphics[page=1]{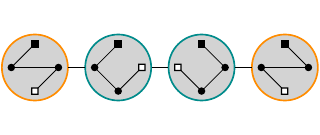}
        \subcaption{Flip graph of $\paths(S_1,s)$, for any ${s\in S_1}$.}
        \label{fig:flipgraph-c4}
    \end{subfigure}%
    \begin{subfigure}[t]{0.5\textwidth}
        \centering
        \includegraphics[page=2]{figures/ConvexFixedEnd}
        \subcaption{Flip graph of $\paths(S_2,s)$, for any ${s\in S_2}$.}
        \label{fig:flipgraph-c5}
    \end{subfigure}%
    \caption{We illustrate flip graphs on two point sets in convex position, $|S_1|=4$ and~$|S_2|=5$. The spirals are highlighted in cyan and the zigzag paths in orange.}
    \label{fig:flipgraph-convex}
    \medskip
\end{figure}

\convexBounds*

\begin{proof}
    Let $C_\text{ccw}$ and~$C_\text{cw}$ denote the {(counter-)clockwise} spirals of~$S$, i.e., the two plane spanning paths that only use convex hull edges, see~\cref{fig:convex2A,fig:convex2B}.

    \begin{figure}[htb]
    \begin{subfigure}[t]{0.166\textwidth}
        \centering
        \includegraphics[page=4]{figures/ConvexFixedEnd}
        \subcaption{$C_\text{ccw}$}
        \label{fig:convex2A}
    \end{subfigure}%
    \begin{subfigure}[t]{0.166\textwidth}
        \centering
        \includegraphics[page=5]{figures/ConvexFixedEnd}
        \subcaption{$C_\text{cw}$}
        \label{fig:convex2B}
    \end{subfigure}%
    \begin{subfigure}[t]{0.166\textwidth}
        \centering
        \includegraphics[page=6]{figures/ConvexFixedEnd}
        \subcaption{$Z_\text{ccw}$}
        \label{fig:convex2C}
    \end{subfigure}%
    \begin{subfigure}[t]{0.166\textwidth}
        \centering
        \includegraphics[page=7]{figures/ConvexFixedEnd}
        \subcaption{$Z_\text{cw}$}
        \label{fig:convex2D}
    \end{subfigure}%
    \begin{subfigure}[t]{0.166\textwidth}
        \centering
        \includegraphics[page=8]{figures/ConvexFixedEnd}
        \subcaption{}
        \label{fig:convex2E}
    \end{subfigure}%
    \begin{subfigure}[t]{0.166\textwidth}
        \centering
        \includegraphics[page=9]{figures/ConvexFixedEnd}
        \subcaption{}
        \label{fig:convex2F}
    \end{subfigure}%
    \caption{Illustration for the proof of \cref{lem:convex-fixed-flipgraph}.}
    \label{fig:convex2}
\end{figure}

    For the upper bounds,
    we show that each path $P$ can be transformed into~$C_\text{cw}$ or $C_\text{ccw}$ with $n-3$ flips.
    Because $C_\text{cw}$ and~$C_\text{ccw}$ can be transformed into each other using one flip, we obtain an upper bound of~$2(n-3)+1=2n-5$ on the diameter and of~$n-2$ on the radius.

    It thus remains to show that each path $P$ can be transformed into $C_\text{cw}$ or~$C_\text{ccw}$ with $n-3$ flips.
    As each path contains at least two convex hull edges, we may assume without loss of generality that~$P$ and~$C_\text{cw}$ share at least two edges; otherwise we consider $C_\text{ccw}$.
    Note that the endpoint of~$P$ is incident to a convex hull edge $e$ which is not contained in $P$.
    We add $e$ to $P$ and remove the appropriate edge $f$, see \cref{fig:convex2E,fig:convex2F}.
    Observe that~$f$ lies on the convex hull if and only if  $s=t$.
    Hence, the number of convex hull edges increases until we end in  $C_\text{cw}$ after at most $n-3$ flips.

    For the lower bound on the diameter, we consider the zigzag-paths $Z_\text{cw}$ and~$Z_\text{ccw}$ depicted in \cref{fig:convex2C,fig:convex2D}.
    We  show they have distance $2n-5$; this directly implies that the radius is lower bounded by $\lceil(2n-5)/2\rceil=n-2$.
    Note that the edges incident to $s$ differ in $Z_\text{cw}$ and~$Z_\text{ccw}$.
    In order to flip this first edge,  by \cref{lem:fixed-flips}, the terminal must be a neighbor of~$s$; note that this is only possible if $P\in\{C_\text{cw},C_\text{ccw}\}$.
    Because $C_\text{cw}$ shares only two edges with $Z_\text{cw}$ (and only one with $Z_\text{ccw}$), it takes $n-3$ flips to transform $Z_\text{cw}$ into $C_\text{cw}$.
    Afterward, it takes at least $n-2$ flips to flip $C_\text{cw}$ to $Z_\text{ccw}$ because they share only one edge.

    By the above constructions, it is clear that the spirals are centers.
    As they lie on any shortest paths from $Z_\text{cw}$ to $Z_\text{ccw}$, they are the unique centers.
\end{proof}
\section{The subgraph of \suffixgood paths is connected}
\label{sec:the-subgraph-of-suffixgood-paths}
In this section, we investigate the subclass of \suffixgood paths.
We show that the flip graph on $\suffixgoodpaths(S,s)$ is connected, i.e., \cref{thm:suffix-good}.
In order to prove this theorem, the following property is crucial.
We say that a path is \emph{strongly \suffixgood} if both of its orientations are \suffixgood.

\begin{lemma}
    \label{lem:suffix-good-reversible-path}
    For any two \out points $s,t\in S$, there exists an $st$-path $P$ that is strongly \suffixgood.
\end{lemma}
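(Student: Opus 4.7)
The plan is to proceed by induction on $|S|$. The base case $|S|=2$ is trivial: the single-edge path $(s,t)$ is strongly suffix-independent vacuously.

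For the inductive step with $|S|\geq 3$, I choose an outer point $v\in \layer_0(S)\setminus\{s,t\}$ that is hull-adjacent to $s$ in $S$; such a $v$ always exists because $\conv(S)$ has at least three vertices. Since $v\neq s,t$, both $s$ and $t$ remain outer in $S'=S\setminus\{v\}$, so the inductive hypothesis yields a strongly suffix-independent $st$-path $P'=(s,u_2,\ldots,u_{n-1}=t)$ on $S'$. I then build $P$ on $S$ by inserting $v$ into $P'$ at a suitable position, obtaining $P=(u_1,\ldots,u_i,v,u_{i+1},\ldots,u_{n-1})$ for some $i$.

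A natural first attempt is the insertion at $i=1$, which has the advantage that the new edge $sv$ is a hull edge of $S$ (by the choice of $v$) and therefore cannot be crossed by any edge of $P'\subseteq\conv(S)$. Forward suffix-independence is then immediate: suffixes at vertices of $P'$ are inherited from $P'$, while the suffix at $v$ has $v$ on $\partial\conv(S)$ and hence outer in any subset of $S$ containing $v$. The key geometric fact underpinning the subsequent arguments is that $s$ and $v$ share a hull edge of $S$, so all of $u_2,\ldots,u_{n-1}$ lie strictly in one open half-plane $H^+$ bounded by the line through $s$ and $v$.

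The main obstacle is that the naive insertion at position $1$ can fail both planarity (the second new edge $vu_2$ may cross some edge $u_ku_{k+1}$ of $P'$) and reverse suffix-independence (some $u_j$ may lie in the ``wedge'' $\conv(\{s,v,u_2,\ldots,u_{j-1}\})\setminus\conv(\{s,u_2,\ldots,u_{j-1}\})$, so that adding $v$ engulfs $u_j$). To handle these cases, I would use that $v$ lies outside $\conv(S')$ together with the hull-adjacency of $v$ and $s$ to locate two consecutive vertices $u_i,u_{i+1}$ of $P'$ both visible from $v$ in the planar drawing, and insert $v$ between them. Proving that such a position always exists and that it yields a plane strongly suffix-independent path is the main technical step; I expect it to rest on a careful case analysis exploiting the half-plane structure $H^+$ together with the planarity of $P'$.
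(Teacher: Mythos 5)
Your proposal sets up an induction (remove a hull neighbour $v$ of $s$ with $v\neq t$, obtain a strongly \suffixgood $st$-path $P'$ on $S\setminus\{v\}$, reinsert $v$), but the entire content of the lemma is concentrated in the step you explicitly defer: that a suitable insertion position always exists and that inserting $v$ there yields a plane path that is strongly \suffixgood. Worse, the criterion you propose for that step --- insert $v$ between two consecutive vertices $u_i,u_{i+1}$ of $P'$ that are both visible from $v$ --- is not sufficient even when such a pair exists. Visibility only prevents the two new edges from crossing $P'$; it does not address the hull conditions. After inserting $v$ between $u_i$ and $u_{i+1}$, every suffix $P[u_j,t]$ with $j\le i$ gains the point $v$, so its hull grows by the region $\conv(P'[u_j,t]\cup\{v\})\setminus\conv(P'[u_j,t])$, and independence requires that no point of $P'[s,u_j]$ preceding $u_j$ lies in that region; symmetrically, for every $j\ge i+1$ the prefix $P[s,u_j]$ (a suffix of the reversed path) gains $v$ and must not capture any later point. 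The ``engulfing'' problem you correctly identify for the insertion at position $1$ therefore recurs at every candidate position, and neither the half-plane structure given by the hull edge $sv$ nor the visibility of $u_i$ and $u_{i+1}$ rules it out. In addition, the existence of a pair of consecutive path vertices both visible from $v$ (with respect to the remaining edges of $P'$) is itself asserted without proof. As it stands, establishing that a good insertion position exists is essentially a restatement of the lemma, so the argument has a genuine gap.

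For contrast, the paper proves the lemma by a direct construction rather than by induction: choosing coordinates so that $t$ is rightmost and $s$ lies above $t$, it splits $S$ by the vertical line through $s$, visits the points on the left side in counterclockwise radial order around $s$ (starting with a convex hull edge), and then visits the remaining points in order of increasing $x$-coordinate, ending at $t$. Every point $p$ then admits an explicit separating line (the line through $s$ and $p$ on the left side, the vertical line through $p$ on the right side) separating $P[s,p]$ from $P[p,t]$, which simultaneously certifies that the suffix at $p$ is \good in both orientations. To salvage your inductive route you would need an argument of comparable strength for the reinsertion step, i.e., a proof that some position satisfies all of the hull-shadow conditions above together with planarity.
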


In fact, \cref{thm:suffix-good,lem:suffix-good-reversible-path} also imply that the flip graph induced by the union of all \suffixgood paths is connected.

\begin{corollary}
    \label{cor:suffix-good} 
    For every point set $S$ in general position, the flip graph of the \suffixgood paths $\bigcup_{s\in S} \suffixgoodpaths(S,s)$ is connected.
\end{corollary}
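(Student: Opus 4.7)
The plan is to combine the two previously stated results: \cref{thm:suffix-good} yields connectivity of each fixed-endpoint piece, and \cref{lem:suffix-good-reversible-path} provides the glue between different pieces.

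First, I would reduce the index set of the union from all of $S$ to $\layer_0$. By the very definition of \suffixgood (applied to the full path $P=P[s,t]$ as a suffix of itself), the start $s$ must be \out with respect to $S$, and therefore lies in $\layer_0$. Consequently $\suffixgoodpaths(S,s)=\emptyset$ whenever $s\notin \layer_0$, and
\[
\bigcup_{s\in S}\suffixgoodpaths(S,s)\;=\;\bigcup_{s\in \layer_0}\suffixgoodpaths(S,s).
\]
By \cref{thm:suffix-good}, each summand $\suffixgoodpaths(S,s)$ on the right already induces a connected subgraph of the flip graph of the whole union.

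Second, to connect these pieces, I would pick any two distinct outer points $s,s'\in \layer_0$ and apply \cref{lem:suffix-good-reversible-path} to obtain an $ss'$-path $P$ that is strongly \suffixgood, meaning that both $P$ and $\overleftarrow{P}$ are \suffixgood. Since the flip graph is defined on unoriented paths, this single $P$ is one vertex that simultaneously belongs to $\suffixgoodpaths(S,s)$ and to $\suffixgoodpaths(S,s')$. Doing this for every pair of outer points exhibits a common vertex between every pair of the connected subgraphs~$\suffixgoodpaths(S,s)$ for $s\in\layer_0$, so their union is connected.

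I do not expect a real obstacle: the corollary is essentially a one-paragraph combination of the two black-box results. The only subtlety worth being explicit about is the observation that \suffixgood paths must begin at an outer point, which collapses the union over $s\in S$ to a union over $s\in\layer_0$ and justifies invoking \cref{thm:suffix-good} on each non-empty summand.
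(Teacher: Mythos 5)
Your proposal is correct and matches the paper's own argument: both proofs invoke \cref{thm:suffix-good} to handle a single starting point and \cref{lem:suffix-good-reversible-path} to produce a strongly \suffixgood path acting as a common vertex gluing the pieces together. The only cosmetic difference is that you explicitly collapse the union to $s\in\layer_0$ up front, whereas the paper makes the same observation inline by noting that $s_1,s_2$ are \out.
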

\begin{proof}
    Let $P_1$ and~$P_2$ be two \suffixgood paths starting in $s_1$ and~$s_2$, respectively.
    By definition, $s_1$ and~$s_2$ are \out.
    If $s_1=s_2$, \cref{thm:suffix-good} implies that there exists a flip sequence from $P_1$ to $P_2$.
    So, we consider the case that~${s_1\neq s_2}$.
    Due to \cref{lem:suffix-good-reversible-path}, there exists a strongly \suffixgood path $P$ from $s_1$ to~$s_2$, i.e., $P$ and the reversed path $\overleftarrow P$ are \suffixgood.
    So \cref{thm:suffix-good} ensures that~$P_1$ can be flipped to $P$, and that $\overleftarrow P$ can be flipped to~$P_2$.
\end{proof}

We now present a constructive proof of \cref{lem:suffix-good-reversible-path}.
\bigskip
\begin{proof}[Proof of \cref{lem:suffix-good-reversible-path}]
    Without loss of generality, assume that~$s$ lies above~$t$ and that~$t$ is rightmost.
    Consider a vertical line $\verticalLine$ through $s$ that splits $S$ into two sets, the set $S_1$ of points contained in the left (closed) half-space of~$\verticalLine$ and its complement $S_2=S\setminus S_1$.
    We~define $P$ by two subpaths $P_1$ and~$P_2$, for an illustration see \cref{fig:unobstructed}.

    \begin{figure}[htb]
		\centering
		\includegraphics{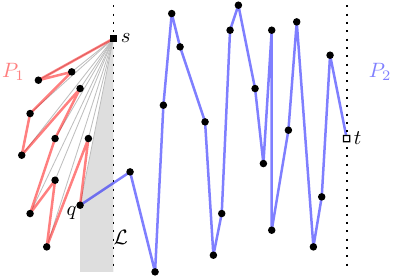}
		\caption{Illustration for the proof of \cref{lem:suffix-good-reversible-path}.}
		\label{fig:unobstructed}
    \end{figure}
    
    Starting from~$s$, $P_1$ visits all points in $S_1$ in the ccw order that~$s$ sees them.
    In particular, if $|S_1|>1$, we ensure that the first edge of~$P_1$ is a convex hull edge of~$S$; if $|S_1|=1$, $P_1$ is the trivial path starting and ending in~$s$.

    Let $q$ denote the endpoint of~$P_1$.
    Observe that the cone spanned by $\verticalLine$ and~$sq$ contains no points; otherwise such a point would be in $S_1$ and visited after $q$.
    The path~$P_2$ visits all points in $S_2\cup \{q\}$.
    It starts in $q$ and then collects the points by increasing $x$-coordinate.
    As~$t$ is rightmost, $P_2$ ends in $t$.
    Clearly, the concatenation of~$P_1$ and~$P_2$ is an $st$-path.
 	
    It remains to prove that~$P$ and~$\overleftarrow P$ are \suffixgood.
    By construction, for each~${p\in S_1}$, the line through $p$ and~$s$ separates $P[s,p]$ from $P[p,t]$.
    For each $p\in S_2$, the vertical line through $p$ separates $P[s,p]$ from $P[p,t]$.
    Hence, each $p\in S$ lies on the boundary of the convex hull of its reversed prefix and~suffix, and the specified lines separate their convex hulls.
    Consequently, for all $p\in S$, the reversed prefix and the suffix starting in $s$ is \good.
\end{proof}

Using~\cref{lem:suffix-good-reversible-path} as an essential tool, we now prove \cref{thm:suffix-good}.
\thmSuffix*
\begin{proof}
    We prove this by induction on the cardinality of~$S$.
    In the base case, for~${|S|=2}$, there exists exactly one path from $s$ to the other point of~$S$.
    Thus, the flip graph is trivially connected.

    For the induction step, consider a point set $S$ with $|S|>2$ and two arbitrary \suffixgood paths $P_1,P_2\in\suffixgoodpaths(S,s)$.
    Let $s_i$ be the successor of~$s$ on $P_i$, and let $P_i'$ be the suffix of~$P_i$ starting at~$s_i$.
    We distinguish two cases.
    
    If $s_1=s_2$, then by induction, $P'_1$ can be transformed into $P'_2$.
    This directly yields a flip sequence from $P_1$ to $P_2$.

    If $s_1\neq s_2$, we consider a path $H$ from $s_1$ to~$s_2$ on $S\setminus \{s\} $ such that each prefix and suffix is \good;
    the existence of $H$ is guaranteed by~\cref{lem:suffix-good-reversible-path}.
    By induction,~$P_1'$ can be transformed into $H$, see \cref{fig:suffix-independent-induction-a,fig:suffix-independent-induction-b}.
    We may replace the edge $ss_1$ with $ss_2$ in the resulting path, see \cref{fig:suffix-independent-induction-b,fig:suffix-independent-induction-c}.
    This reverses~$H$; however, as any prefix of~$H$ is \good, any suffix of~$\overleftarrow H$ is \good as well.
    Thus, by induction, $\overleftarrow H$ can be transformed into~$P_2'$, see \cref{fig:suffix-independent-induction-c,fig:suffix-independent-induction-d}.
    We conclude that the flip graph of the \suffixgood paths in $\suffixgoodpaths(S,s)$ is connected.
\end{proof}
\begin{figure}[htb]
    \hfil%
    \begin{subfigure}[b]{0.22\textwidth}
        \centering
        \includegraphics[page=1]{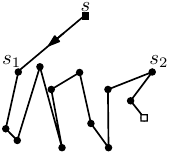}
        \subcaption{$P_1$}
        \label{fig:suffix-independent-induction-a}
    \end{subfigure}%
    \hfill%
    \begin{subfigure}[b]{0.22\textwidth}
        \centering
        \includegraphics[page=2]{figures/suffix-independent-induction}
        \subcaption{}
        \label{fig:suffix-independent-induction-b}
    \end{subfigure}%
    \hfill%
    \begin{subfigure}[b]{0.22\textwidth}
        \centering\includegraphics[page=3]{figures/suffix-independent-induction}
        \subcaption{}
        \label{fig:suffix-independent-induction-c}
    \end{subfigure}%
    \hfill%
     \begin{subfigure}[b]{0.22\textwidth}
        \centering
        \includegraphics[page=4]{figures/suffix-independent-induction}
        \subcaption{$P_2$}
        \label{fig:suffix-independent-induction-d}
    \end{subfigure}%
    \caption{
        Illustration for the proof of \cref{thm:suffix-good}.
        By induction, we can flip $P_1$ and~$P_2$ to a path that has $H$ or $\overleftarrow H$ as a suffix; indicated by the cyan subpath.
    }
    \label{fig:suffix-independent-induction}
\end{figure}
\section{General tools and point sets on two convex layers}
\label{sec:twolayers}

In this section, we consider point sets with two convex layers.
We start by investigating the setting with a fixed \out start point.

\thmConvOut*

The idea for our proof of \cref{thm:2conv} is based on the insight that if each path can be flipped to some \suffixgood path, then \cref{thm:suffix-good} implies the connectivity of the respective flip graph.
In a first step, we aim to obtain some path~${P\in \paths(S,s)}$ where the (outgoing) edge of~$s$ is a \emph{level} edge, i.e., it lies on the boundary of the convex hull.
If the first edge of~$P$ is a level edge $sq$, then the suffix $P[q,t]$ is \good, and we may consider this suffix instead (or assume by induction that we can flip to a \suffixgood path).

To obtain a path that starts with a level edge, we prove that we can increase~$\leveledges_0$ for any given path, unless $\leveledges_0=|\layer_0|-1$; recall that~$\leveledges_0(P)$ denotes the number of \mbox{$\layer_0$-level} edges in $P$.
It follows that we eventually introduce a level edge that is outgoing from $s$.
The crucial part lies in considering paths with two special properties.

In particular, a path $P\in \paths(S,s)$ with $s\in \layer_0$ belongs to $\chordfreepaths(S,s)$ if
\begin{enumerate}[(i)]
    \item \label{itm:i} it has no chords, and
    \item \label{itm:ii} the outgoing edge of~$s$ goes to some $q\notin \layer_0$.
\end{enumerate}

Furthermore, we say that the \emph{\prop} holds for $k\in\mathbb{N}$ exactly if for every point set~$S$ in general position with $L(S)\leq k$,
any $st$-path ${P\in\chordfreepaths(S,s)}$ with $\leveledges_0(P)<|\layer_0|-1$ can be flipped to an $st'$-path $P'\in\paths(S,s)$, such that 
(a)~$\leveledges_0(P')>\leveledges_0(P)$ if $t\in \layer_0$, or 
(b)~$\leveledges_0(P')=\leveledges_0(P)$ and $t'\in \layer_0$, otherwise.

\begin{theorem}
    \label{thm:k-prop-implies-connection}
    If the \prop holds, then for every   point set $S$ in general position  with $L(S)\leq k$ and~$s\in \layer_0$, the flip graph of~$\paths(S,s)$ is connected.
\end{theorem}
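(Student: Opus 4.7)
My approach to \cref{thm:k-prop-implies-connection} is strong induction on $|S|$, reducing connectivity of $\paths(S,s)$ to a smaller instance by aligning every path with a fixed convex-hull edge at $s$. The base case $|S|\leq 2$ is trivial. For the step, fix a neighbor $q\in\layer_0$ of $s$ on the convex hull of $S$, so that $sq$ is a level edge, $q$ remains outer in $S\setminus\{s\}$, and $\layernumber(S\setminus\{s\})\leq\layernumber(S)\leq k$ (removing a point never increases the layer number). I plan to establish two claims: \textbf{(A)} every $P\in\paths(S,s)$ can be flipped to some path whose first edge is $sq$; and \textbf{(B)} any two paths in $\paths(S,s)$ starting with $sq$ lie in the same flip-component. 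Together (A) and (B) yield the theorem.

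Claim (B) follows directly from the induction hypothesis. If $P_1,P_2$ both start with $sq$, their suffixes lie in $\paths(S\setminus\{s\},q)$, which is connected by induction. Since $sq$ is a convex-hull edge of $S$, the convexity of $\conv(S)$ forces any edge with both endpoints in $S\setminus\{s\}$ to avoid $sq$ except at a common endpoint, so every suffix-flip lifts to a valid flip of the full path.

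For Claim (A), the plan is to monotonically drive $\leveledges_0(P)$ up to its maximum $|\layer_0|-1$. At that value the $\layer_0$-vertices of $P$ appear consecutively and are joined by level edges, and since $s\in\layer_0$ starts $P$, the path must open with a spiral on $\layer_0$; in particular its first edge is a level edge, either $sq$ or the other $\layer_0$-neighbor $sq'$ of $s$. If $P\in\chordfreepaths(S,s)$ and $\leveledges_0(P)<|\layer_0|-1$, I invoke the \prop: outcome (a) raises $\leveledges_0$ directly when $t\in\layer_0$, while outcome (b) relocates $t$ into $\layer_0$ without decreasing $\leveledges_0$, priming outcome (a) on the next iteration. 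If $P\notin\chordfreepaths(S,s)$, then either the first edge of $P$ is already a level edge (and the goal is essentially met), or $P$ contains a chord; in the latter case I plan to use \cref{obs:chordfree} by first routing the endpoint to an $\layer_0$-neighbor of $s$ so that the resulting path becomes chord-free, and then reinstating the inward-first-edge condition with a further flip. Once the first edge is a level edge, if it happens to be $sq'\neq sq$, one application of Claim (B) produces a path starting with $sq'$ and ending at $q$, and a single flip granted by \cref{lem:fixed-flips} swaps $sq'$ for $sq$.

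The chief obstacle I anticipate is controlling the loop in Claim (A). Outcome (b) of the \prop returns a path in $\paths(S,s)$ that need not lie in $\chordfreepaths(S,s)$, and the subroutine of ``chord elimination via endpoint routing'' is itself non-trivial since flips can only redirect the endpoint via \cref{lem:fixed-flips}. To make termination clean, I plan to introduce a lexicographic potential combining $\leveledges_0(P)$ (to be maximized), the indicator $[t\in\layer_0]$, and the number of chords in $P$ (to be minimized), and to verify step by step that every iteration of the loop strictly improves this potential until $\leveledges_0(P)=|\layer_0|-1$.
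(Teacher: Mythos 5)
Your top-level reduction differs from the paper's: instead of invoking \cref{thm:suffix-good} and showing that every path can be flipped to a \suffixgood path, you fix the hull edge $sq$ at $s$ and recurse on $\paths(S\setminus\{s\},q)$. Your Claim (B) is sound (a hull edge of $S$ cannot be crossed by any segment between points of $S\setminus\{s\}$, and convex-layer depth does not increase under deletion, so the induction hypothesis applies to $S\setminus\{s\}$), and the final switch from $sq'$ to $sq$ via \cref{lem:fixed-flips} works. The problem lies in Claim (A).

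The genuine gap is your chord-elimination subroutine. To enter the regime where the \prop applies you must reach a path in $\chordfreepaths(S,s)$, and when $P$ contains a chord you propose to ``route the endpoint to an $\layer_0$-neighbor of~$s$'' so that \cref{obs:chordfree} applies. But steering the endpoint of a path on the \emph{full} set $S$ to a prescribed vertex is exactly the kind of reachability statement the theorem asserts; you offer no mechanism for it, so the argument is circular. The paper avoids this by working with the first chord $uv$ of $P$: the suffix $P[u,t]$ is an \good suffix spanning strictly fewer points, so the induction hypothesis (together with \cref{thm:suffix-good}) gives full connectivity of the flip graph on that suffix, \cref{lem:suffix-good-reversible-path} supplies a target path ending at $v$, and \cref{obs:chordfree} then makes the rewritten suffix --- and hence the whole path --- chord-free; independence of the suffix (via \cref{lem:pseudochord}) is what guarantees these suffix flips extend to $P$. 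Your proposal never invokes the smaller-point-set induction for this purpose, nor does it argue that the suffix flips are compatible with the prefix. Relatedly, your lexicographic potential is not verified and, with $\leveledges_0$ as the leading term, it can fail: any chord-removal that re-flips a suffix may destroy $\layer_0$-level edges lying in that suffix. The clean way out (implicit in the paper) is to check that a flip granted by the \prop only adds an edge at the current endpoint, hence cannot create a chord, and can only stop the first edge from being inward by making it a level edge --- which is precisely the exit of the loop; without such an argument, or the paper's suffix-based chord removal, Claim (A) does not go through.
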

\begin{proof}
    By \cref{thm:suffix-good}, it suffices to show that each path can be flipped to some \suffixgood path.
    We prove this by induction on $|S|$.
    The induction base for $|S|=1$ is trivial because $\paths(S,s)$ consists of a single path.

    For the induction step, consider an $st$-path $P\in \paths(S,s)$ where $n\coloneqq|S|>1$ and assume that the statement holds for all point sets with less than $n$ points.

    First, we show that we can restrict our attention to chord-free paths.
    If the path~$P$ contains a chord, we consider the first chord $uv$ of~$P$.
    The \good suffix~$P[u,t]$ visits a subset $S'\subset S$ of fewer than $n$ points, so the induction hypothesis ensures that~$P[u,t]$ can be flipped to any path in $\paths(S',u)$.
    In particular, $P[u,t]$ (and by extension, $P$) can be flipped to a path $P'$ ending in $v$ (which exists by \cref{lem:suffix-good-reversible-path}).
    For an illustration, consider \cref{fig:good-prefixes}.
    \begin{figure}[htb]
        \begin{subfigure}[t]{0.33\textwidth}
        	\centering
        	\includegraphics[page=2]{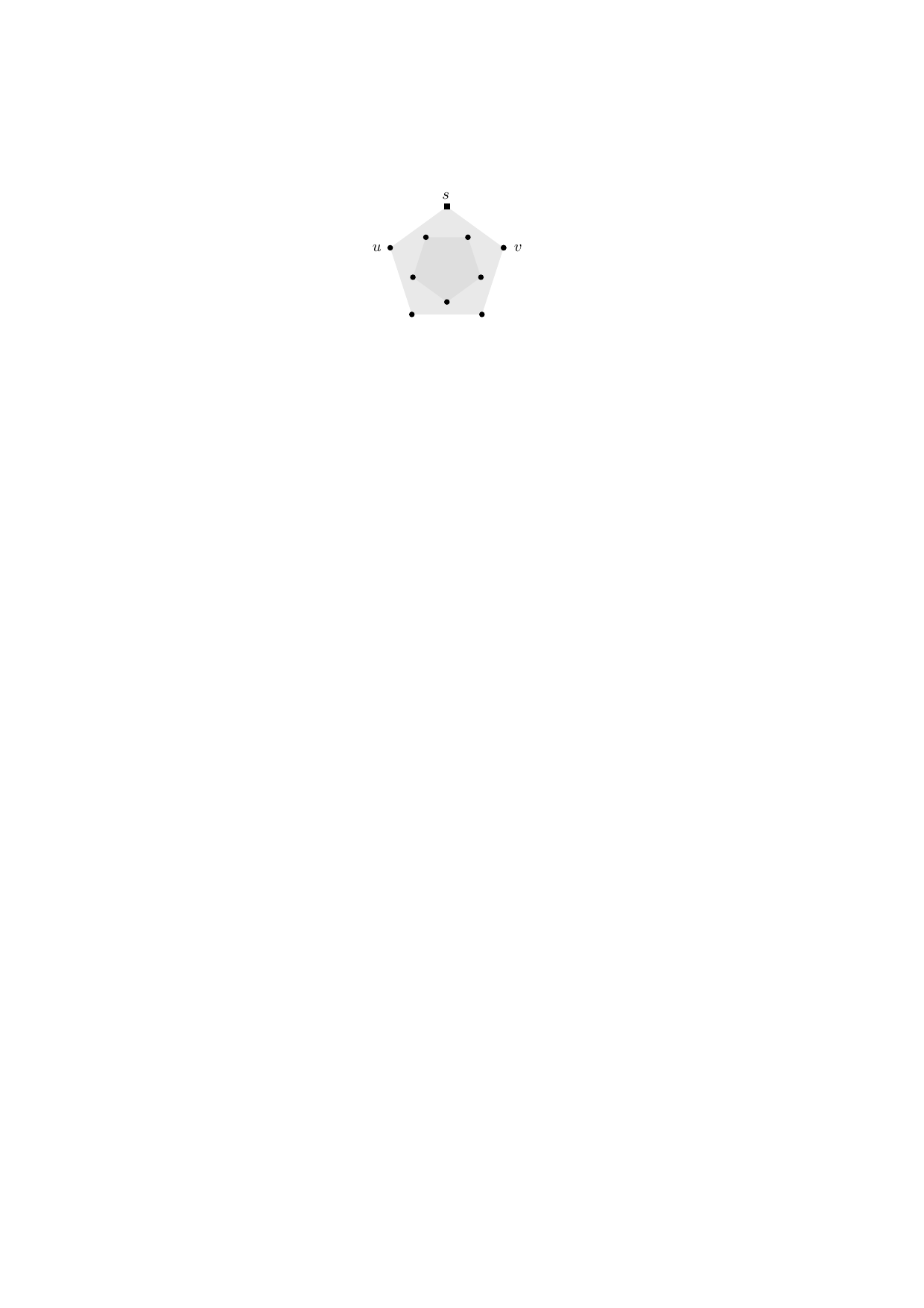}
        	\subcaption{The $st$-path $P$.}
        	\label{fig:good-prefixesA}
        \end{subfigure}%
        \begin{subfigure}[t]{0.33\textwidth}
        	\centering
        	\includegraphics[page=3]{figures/fixedTerm}
        	\subcaption{The suffix $P[u,t]$.}
        	\label{fig:good-prefixesB}
        \end{subfigure}%
        \begin{subfigure}[t]{0.33\textwidth}
        	\centering
        	\includegraphics[page=5]{figures/fixedTerm}
        	\subcaption{The path $P'$.}
        	\label{fig:good-prefixesD}
        \end{subfigure}%
        \caption{Flipping to chord-free paths.}
        \label{fig:good-prefixes}
    \end{figure}
    The path~$P'$ is chord-free by \cref{obs:chordfree} and as it does not contain the edge $uv$ (which would be a chord in~$S$), adding the prefix $P[s,u]$ yields a chord-free path. 
    We can therefore assume property (\ref{itm:i}) for the remainder of this proof.

    We now further establish property~(\ref{itm:ii}).
    To this end, let $s_1$ denote a point such that~$ss_1$ is the first edge of~$P$.
    If $s_1\in\layer_0$ (implying that $ss_1$ lies on the boundary of the convex hull),
    we may flip the \good suffix~$P[s_1,t]$  to a \suffixgood path by the induction hypothesis.
    Consequently, the same applies to $P$.
    Thus, we may assume that $s_1\notin \layer_0$, i.e., property (\ref{itm:ii}) holds.

    It thus remains to show that each path $P$ in $\chordfreepaths(S,s)$ can be flipped to a \suffixgood path.
    By repeatedly using the \prop, we flip $P$ to a path where the first edge is a level edge in $\layer_0$.
    Then, as above, we may use induction to flip to a \suffixgood path.
    This concludes the proof.
\end{proof}

In order to prove \cref{thm:2conv}, it remains to establish the \propk{2}.
To this end, we introduce a crucial tool that exploits visibility between a suffix of a path and some preceding edge.
The following statement allows us to argue the existence of flip sequences to paths that end at arbitrary points which, in the original path, directly precede an \good~suffix, as illustrated in~\cref{fig:convex-region-idea}.

\begin{figure}[htb]
    \begin{subfigure}[b]{0.33\textwidth}
        \centering
        \includegraphics[page=1]{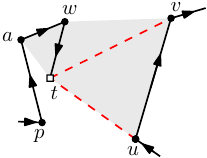}
        \subcaption{}
        \label{fig:convex-region-idea-a}
    \end{subfigure}%
    \begin{subfigure}[b]{0.33\textwidth}
        \centering
        \includegraphics[page=2]{figures/convex-region-idea}
        \subcaption{}
        \label{fig:convex-region-idea-b}
    \end{subfigure}%
    \begin{subfigure}[b]{0.33\textwidth}
        \centering
        \includegraphics[page=3]{figures/convex-region-idea}
        \subcaption{}
        \label{fig:convex-region-idea-c}
    \end{subfigure}%
    \caption{
        Illustration for \cref{lem:convex-region-chord-flip}:
        We exploit the visibility between an \good suffix $P[a,t]$ and some preceding edge $uv\in P$ (some flips omitted).
    }
    \label{fig:convex-region-idea}
\end{figure}

\begin{lemma}
    \label{lem:convex-region-chord-flip}
    Let $S$ be a point set in general position and~$P\in\paths(S,s)$ a path with an \good suffix $X=P[a,t]$ spanning $A\subset S$,
    such that the flip graph of~$\paths(A, a)$ is connected.
    If there exists an edge $uv\in P\setminus X$ such that
    (i) $uv$~is a boundary edge of $\conv(A\cup \{u, v\})$ and
    (ii) the interior of ${\conv(A\cup \{u, v\})\setminus(\conv(A)\cup \{uv\})}$ contains neither points nor parts of edges,
    then there exists a flip sequence from $P$ to a path that ends at the predecessor of~$a$ on $P$.
    This flip sequence introduces only edges inside $\conv(A\cup \{u, v\})$.
\end{lemma}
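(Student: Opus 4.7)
Let $p$ denote the predecessor of $a$ on $P$. The plan has two stages: first, use the connectivity hypothesis on $\paths(A, a)$ to reshape the suffix $X$ inside $\conv(A)$; second, exploit the empty pocket around $uv$ to flip the endpoint across it, eventually reaching $p$.

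For the first stage, I would pick a vertex $b \in A$ on the pocket-facing arc of $\conv(A)$, specifically the vertex adjacent to $v$ along the boundary of $\conv(A \cup \{u, v\})$. Since the flip graph of $\paths(A, a)$ is connected, there is a flip sequence in $\paths(A, a)$ turning $X$ into some $a$-to-$b$ path $X'$. Each such flip introduces an edge inside $\conv(A)$, which contains no prefix vertex of $P$ (because $X$ is \good); consequently, these flips are valid flips of the full path and leave the prefix, and in particular the edge $uv$, unchanged.

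For the second stage, I would perform the flip that removes $uv$ and introduces $ub$: the segment $ub$ lies entirely inside the pocket, hence is crossing-free by condition~(ii), and sits inside $\conv(A \cup \{u, v\})$ as required. By \cref{lem:fixed-flips}, this flip moves the endpoint from $b$ to $v$. If $v = p$ the proof is complete. Otherwise, the middle subpath of the original $P$ between $v$ and $p$ now appears reversed at the end of the new path, and I would iterate: after a further reconfiguration inside $\conv(A)$ (again appealing to connectivity of $\paths(A, a)$) that exposes a new pocket-adjacent vertex, I perform another pocket-crossing flip that pushes the endpoint one step closer to $p$ along the reversed middle subpath; finitely many such steps suffice.

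The main obstacle will be making the iterative stage precise in the general case where $v \neq p$: at each step I would need to verify directly that the introduced pocket-crossing edge lies inside $\conv(A \cup \{u, v\})$ and creates no crossings, since the \good suffix property may fail for the intermediate paths and so the lemma cannot be applied recursively in its original form. The empty-pocket condition, together with the freedom to reshape inside $\conv(A)$ at every iteration, should be enough to carry this through while guaranteeing that every edge ever introduced lies in $\conv(A \cup \{u, v\})$.
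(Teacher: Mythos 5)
There is a genuine gap in your second stage. Your key claim is that the segment $ub$, where $b$ is the vertex of $\conv(A)$ adjacent to $v$ on the boundary of $\conv(A\cup\{u,v\})$, ``lies entirely inside the pocket'' and is therefore crossing-free by condition~(ii). This is false in general: condition~(ii) only guarantees that the region between $uv$ and $\conv(A)$ is empty, and $b$ is only guaranteed to see $v$ across that region, not $u$. If the pocket-facing chain of $\conv(A)$ between the neighbours of $u$ and of $v$ has several vertices, the segment $ub$ typically cuts through the interior of $\conv(A)$ (e.g.\ take $A$ on a circle and $uv$ a short edge flanking one side of it). Worse, the flip cannot be rescued by a cleverer choice of the reshaped suffix $X'$: whenever points of $A$ lie strictly on both sides of the line through $u$ and $b$, every spanning $a$--$b$ path of $A$ must contain an edge joining the two sides, and since $b$ is an endpoint of that line segment inside $\conv(A)$, such an edge necessarily crosses $ub$. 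So the flip ``remove $uv$, add $ub$'' is geometrically impossible for \emph{every} $X'\in\paths(A,a)$ ending at $b$, and your plan stalls before the iteration even begins. (A smaller issue: whether adding $ub$ after deleting $uv$ even yields a path depends on whether $u$ or $v$ is traversed first on $P$; in one orientation it closes a cycle, which your write-up does not address.)

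The paper avoids exactly this trap by not jumping across the whole pocket in one flip. It argues by induction on $|A|$: among the vertices $z_1,\dots,z_\ccwchain$ of $\conv(A)$ facing the pocket, convexity guarantees some $z_i$ that sees the entire edge $uv$; the suffix is reshaped (using connectivity of $\paths(A,a)$ together with \cref{lem:suffix-good-reversible-path}) so that it ends at $z_i$ with its hull-neighbour as predecessor, and then a constant number of flips inside the empty region bounded by $uv$ and the chain transfers the endpoint past $z_i$. This removes one point from the suffix and produces a new boundary edge of the smaller hull (e.g.\ $z_ip$) whose pocket is again empty, so the induction hypothesis applies; your acknowledged worry that intermediate paths lose the properties needed for recursion is precisely what this peeling scheme is designed to prevent. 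You would need to replace your single pocket-crossing flip by such a one-point-at-a-time argument (or supply a substitute for the false visibility claim) before the proposal can be considered a proof.
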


\begin{proof}
    Let $A_{uv}\coloneqq A\cup\{u,v\}$.
    Without loss of generality, we assume that~$uv$ is a counterclockwise edge on the boundary of~$\conv(A_{uv})$, and that~$v$ is not the predecessor of $a$ on $P$.
    If~${va\in P}$, clearly $P[u,t]$ is an independent suffix, which implies the existence of an applicable flip sequence into a path that ends at $v$.

    Let~$c_1,\dots, c_k$ refer to the counterclockwise chain of \out points of~$\conv(A)$ such that~$vc_1$ and~$c_{k}u$ are counterclockwise edges on the boundary of~$\conv(A_{uv})$.
    Note that $c_1$ and $c_k$ see $v$ and $u$, respectively.
    Let~${z_1,\dots, z_\ccwchain}$ then be the remaining \out points of $A$ (which are not \out points of $A_{uv}$) such that~${z_1=c_k}$ and~$z_\ccwchain=c_1$.
    For an illustration, consider \Cref{fig:convex-region-exploit-overview}.
    
    \begin{figure}[htb]
        \centering
        \includegraphics[page=1]{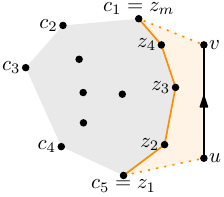}
        \caption{Illustration of the set $A$ and the chains $c_1,\dots, c_k$ and~$z_1,\dots, z_\ccwchain$.}
        \label{fig:convex-region-exploit-overview}
    \end{figure}
    Note that since the orange region is not intersected by any edge and the path is planar, $a$ is not in the chain~${z_2,\dots, z_{\ccwchain-1}}$.

    We now construct the flip sequence by induction on $|A|$.
    For the base case of~$|A|=1$, the suffix $X$ consists of the point $a$.
    By assumption, the triangle~$\Delta(a,u,v)$ does not contain points, and $t$ sees both~$u$ and~$v$, as depicted in~\cref{fig:convex-region-base-case-a}.
    We~perform two flips; we replace $uv$ with~$ua$, see~\cref{fig:convex-region-base-case-b}.
    This reorients the edge between $a$ and its predecessor~$p$ on $P$.
    Further, we remove $ap$ and add $av$, obtaining a path that ends in $p$, as illustrated in~\cref{fig:convex-region-base-case-c}.

    \begin{figure}[htb]
        \begin{subfigure}[b]{0.33\textwidth}
            \centering
            \includegraphics[page=1]{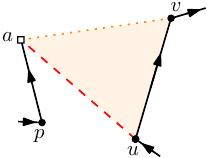}
            \subcaption{}
            \label{fig:convex-region-base-case-a}
        \end{subfigure}%
        \begin{subfigure}[b]{0.33\textwidth}
            \centering
            \includegraphics[page=2]{figures/convex-region-base-case}
            \subcaption{}
            \label{fig:convex-region-base-case-b}
        \end{subfigure}%
        \begin{subfigure}[b]{0.33\textwidth}
            \centering
            \includegraphics[page=3]{figures/convex-region-base-case}
            \subcaption{}
            \label{fig:convex-region-base-case-c}
        \end{subfigure}%
        \caption{
            Illustration for the proof of \cref{lem:convex-region-chord-flip}:
            the base case, where $|A|=1$.
        }
        \label{fig:convex-region-base-case}
    \end{figure}

    For the induction step, we assume that the claim holds true for \good suffixes on at most $|A|-1$ points.
    It therefore suffices to  find a flip sequence that removes one point from the suffix and yields another \good suffix together with a corresponding visible edge.
    We distinguish two cases based on the length of the chain $z_1,\dots, z_\ccwchain$.
    Note that~$|A|\geq 2$ implies~${\ccwchain\geq 2}$.

    If $\ccwchain=2$, at least one of~$z_1,z_2$ is not the start point $a$ of~$X$; we denote it by $z$.
    Clearly, $z$ sees both $u$ and~$v$.
    By \cref{lem:suffix-good-reversible-path}, there exists a strongly \suffixgood $az$-path on~$A$.
    As  the flip graph of~$\paths(A,a)$ is connected by assumption, we may assume without loss of generality that $P$ has a strongly \suffixgood suffix $P[a,z]$.
    Let $p$ refer to the predecessor of~$z$ in $P$, as illustrated in \cref{fig:convex-region-ell-2-a}.
    As above, we may replace $uv$ with~$uz$, reorienting $pz$ and allowing us to replace it with $zv$.
    The result is an $sp$-path $P'$ with a strongly \suffixgood suffix $P'[a,p]$ of size $|A|-1$, see \cref{fig:convex-region-ell-2-b}.
    As $z_1,u,v,z_2$ form a counterclockwise chain of points on the boundary of~$\conv(A_{uv})$, the two diagonals $z_{1}v$ and~$uz_2$ of the quadrilateral lie on the boundary of~$\conv(A\cup\{v\})$ and~$\conv(A\cup\{u\})$, respectively.
    Thus, we may apply the induction hypothesis to flip to a path that ends at the predecessor of $a$ on $P$.
    For an illustration of the entire flip sequence, see~\Cref{fig:convex-region-ell-2}, particularly the application of the induction hypothesis in~\Cref{fig:convex-region-ell-2-c}.

    \begin{figure}[htb]
        \begin{subfigure}[b]{0.33\textwidth}
            \centering
            \includegraphics[page=5]{figures/convex-region-exploit}
            \subcaption{}
            \label{fig:convex-region-ell-2-a}
        \end{subfigure}%
        \begin{subfigure}[b]{0.33\textwidth}
            \centering
            \includegraphics[page=6]{figures/convex-region-exploit}
            \subcaption{}
            \label{fig:convex-region-ell-2-b}
        \end{subfigure}%
        \begin{subfigure}[b]{0.33\textwidth}
            \centering
            \includegraphics[page=7]{figures/convex-region-exploit}
            \subcaption{}
            \label{fig:convex-region-ell-2-c}
        \end{subfigure}%
        \caption{
            Illustration for the proof of \cref{lem:convex-region-chord-flip} for the case $\ccwchain=2$.
        }
        \label{fig:convex-region-ell-2}
    \end{figure}

    If $\ccwchain>2$, we need to work a bit harder.
    Firstly, we observe that there  exists~${i\notin\{ 1, \ccwchain\}}$ such that the point $z_i$ can see the edge $uv$.
    To see this, note that by convexity, it holds that if $z_i$ sees $v$ then $z_{i+1},\dots,z_\ccwchain$ see $v$, and similarly if $z_i$ sees $u$ then $z_1,\dots,z_{i-1}$ see $u$.

    Secondly, we show that we may assume that the predecessor of~$z_i$ is either~$z_{i-1}$ or $z_{i+1}$.
    Clearly, at least one of~$z_{i-1}$ and~$z_{i+1}$ is different from $a$; we denote it by~$p$.
    By \cref{lem:suffix-good-reversible-path}, there exists a strongly \suffixgood $ap$-path on $A\setminus\{z_i\}$.
    Together with the boundary edge $pz_i$, we obtain an $az_i$-path on $A$.
    Because $\paths(A, a)$ is connected by assumption, we may flip $X$ to this path.
    Consequently, we may assume that the predecessor $p$ of~$z_i$ is either $z_{i-1}$ or $z_{i+1}$, which see $u$ or $v$, respectively, as illustrated in \cref{fig:convex-region-ell-g2-a}.

    We then apply the following flips.
    First we replace the edge $uv$ with $uz_i$, and then $pz_i$ with $z_iv$.
    This yields a path that ends in $p$, see \cref{fig:convex-region-ell-g2-b}.

    \begin{figure}[htb]
        \begin{subfigure}[b]{0.33\textwidth}
            \centering
            \includegraphics[page=2]{figures/convex-region-exploit}
            \subcaption{}
            \label{fig:convex-region-ell-g2-a}
        \end{subfigure}%
        \begin{subfigure}[b]{0.33\textwidth}
            \centering
            \includegraphics[page=3]{figures/convex-region-exploit}
            \subcaption{}
            \label{fig:convex-region-ell-g2-b}
        \end{subfigure}%
        \begin{subfigure}[b]{0.33\textwidth}
            \centering
            \includegraphics[page=4]{figures/convex-region-exploit}
            \subcaption{}
            \label{fig:convex-region-ell-g2-c}
        \end{subfigure}%
        \caption{
            Illustration for the proof of \cref{lem:convex-region-chord-flip} for the
            case $\ccwchain>2$.
        }
        \label{fig:convex-region-ell-g2}
    \end{figure}

    As $p$ sees either $u$ or $v$, we may replace $z_iv$ with $z_ip$ and one of~$pu$ or $pv$, respectively.
    This yields a path ending in the predecessor of~$p$, as shown in \cref{fig:convex-region-ell-g2-c}.

    Finally, because $z_ip$ lies in the boundary of~$\conv(A)$, we may apply the induction hypothesis to the current suffix starting in $a$ and the edge $z_ip$.
    This completes the proof.
\end{proof}

We use \cref{lem:convex-region-chord-flip}  to prove the \prop for $k=2$.

\begin{lemma}
    \label{thm:l0-level-increase}
    The \prop holds for $k\leq 2$.
\end{lemma}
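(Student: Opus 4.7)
The plan is to handle the case $k = 1$ vacuously and then treat the substantive case $k = 2$ by induction on $|S|$, case-splitting on whether $t \in \layer_0$ or $t \in \layer_1$. The workhorse tool throughout will be \cref{lem:convex-region-chord-flip}. For $k = 1$, observe that if $S$ lies in convex position then $\layer_0 = S$, so property~(\ref{itm:ii}) of $\chordfreepaths(S, s)$ cannot be satisfied; hence $\chordfreepaths(S, s) = \emptyset$ and the 1-layer property holds vacuously.

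For $k = 2$, consider any $P \in \chordfreepaths(S, s)$ with $\leveledges_0(P) < |\layer_0| - 1$ and first edge $sq$ for some $q \in \layer_1$. Since $P$ is chord-free, every $\layer_0$-edge of $P$ is a level edge; listing the outer points of $P$ in path-order as $s = v_0, v_1, \dots, v_{|\layer_0|-1}$, each consecutive pair $v_i v_{i+1}$ is joined either by a level edge or by an excursion through $\layer_1$. Because the first edge of $P$ is inward, an inner excursion is always present between $v_0$ and $v_1$, and the hypothesis $\leveledges_0(P) < |\layer_0|-1$ gives further structural flexibility.

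For the case $t \in \layer_1$, let $p$ be the last outer point on $P$ and focus on the terminal $\layer_1$-block $X = P[t_1, t]$, where $t_1$ is the successor of $p$ on $P$, and let $A \subseteq \layer_1$ denote its vertex set. Using the inductive hypothesis on $|S|$ combined with \cref{thm:k-prop-implies-connection} applied to the inner-only sub-instance (where the 1-layer property is vacuous, yielding connectivity of the flip graph on $A$), I would rearrange $X$ so that $t_1$ is a tangent vertex of $\conv(A)$ as seen from $p$. Then $pt_1$ becomes a boundary edge of $\conv(A \cup \{p\})$, and \cref{lem:convex-region-chord-flip} applied with $uv = pt_1$ yields a flip sequence ending at $p \in \layer_0$. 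Since every newly introduced edge lies inside $\conv(A \cup \{p\})$, which is disjoint from every $\layer_0$-level edge of $P$, the count $\leveledges_0$ is preserved.

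For the case $t \in \layer_0$, I would locate an inner excursion between two consecutive outer points $v_i, v_{i+1}$ on $P$ not joined by a level edge, pick one close to $t$, and apply \cref{lem:convex-region-chord-flip} to pull the endpoint through this excursion; a subsequent flip then introduces a new $\layer_0$-level edge, strictly increasing $\leveledges_0$. The main obstacle in both cases is verifying condition~(ii) of \cref{lem:convex-region-chord-flip}---that the pocket $\conv(A \cup \{u, v\}) \setminus (\conv(A) \cup \{uv\})$ is empty of both points and path edges---and ensuring the chosen suffix is \good. The two-layer assumption is essential: since $\layer_1$ lies strictly inside $\conv(\layer_0)$, outer-adjacent pockets can be kept clean by careful selection of $X$ and $uv$, and single-layer sub-instances admit vacuous $\chordfreepaths$, which greatly streamlines the induction.
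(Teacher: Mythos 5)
Your $k=1$ observation (that $\chordfreepaths(S,s)=\emptyset$ in convex position, so the \propk{1} holds vacuously) is fine. The $k=2$ argument, however, has a genuine gap at its central step. In the case $t\in\layer_1$ you apply \cref{lem:convex-region-chord-flip} with $uv=pt_1$, where $t_1$ is the start $a$ of the suffix $X$ itself. This is a degenerate choice the lemma does not support: its mechanism re-routes the suffix through an edge $uv$ lying \emph{elsewhere} in the path, thereby freeing the predecessor of $a$ to become the endpoint; if $v=a$, the very first flip of that construction (replacing $uv$ by an edge from $u$ into $A$) collapses. Worse, the conclusion you want is simply unattainable in this configuration: any flip sequence that introduces only edges inside $\conv(A\cup\{p\})$ leaves the prefix $P[s,p]$ untouched, and since $p$ is the only prefix vertex inside that region, the points of $A$ can only reach the rest of the path through $p$; hence $p$ keeps degree two throughout and can never become the endpoint (already for $|A|=1$, ending at $p$ requires adding an edge from $t$ to the predecessor of $p$, an edge outside $\conv(A\cup\{p\})$ whose existence needs a visibility argument you do not supply). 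What is actually needed — and what the paper does — is to locate an \emph{outward} edge $pq$ with $p\in\layer_1$, $q\in\layer_0$ (it exists by \cref{lem:layer-monotone-paths} since $\leveledges_0<|\layer_0|-1$) and flip $pq$ to $pt$, so the real difficulty is visibility of $p$ from $t$.

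That difficulty is precisely where your proposal stops: you acknowledge that condition~(ii) of \cref{lem:convex-region-chord-flip} (the empty pocket) is "the main obstacle" but only assert it can be met "by careful selection." In general the pocket between $p$ and $\conv(A)$ contains other $\layer_1$ points and is crossed by prefix edges, in particular by cutting edges; handling these is the bulk of the paper's proof (reordering the $\layer_1$-suffix via \cref{lem:convex-fixed-flipgraph}, applying \cref{lem:convex-region-chord-flip} only when the obstructing edge is an $\layer_1$-edge, and a separate cutting-edge-removal argument that strictly decreases $\cuttingedges$ and terminates the process). Your $t\in\layer_0$ case is likewise underspecified — you never say which independent suffix and which edge $uv$ you feed to the lemma, nor why the subsequent flip creates a level edge — whereas a single flip suffices here: by chord-freeness and \cref{lem:pseudochord}, one of the two $\layer_0$-neighbours of $t$ has an inward edge to some $p\in\layer_1$, and flipping that edge to the level edge at $t$ both increases $\leveledges_0$ and moves the end into $\layer_1$. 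As written, the proposal does not establish the \propk{2}.
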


\begin{proof}
    For $k=1$, this follows from~\cref{lem:convex-fixed-flipgraph}.
    For $k=2$, the idea is as follows.
    In a first phase, we flip $P$ to a path~$P'$ that ends in~$\layer_1$ with $\leveledges_0(P') > \leveledges_0(P)$.
    Then, in a second phase, we flip to a path~$P''$ that ends on~$\layer_0$ such that~$\leveledges_0(P'') \geq \leveledges_0(P')$.
    For an example, where each phase consists of a single flip, consider~\cref{fig:l0-to-l1}.

    \begin{figure}[htb]
        \centering
        \begin{subfigure}[b]{0.33\textwidth}
            \centering
            \includegraphics[page=1]{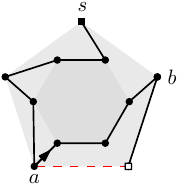}
            \subcaption{}
            \label{fig:l0-to-l1A}
        \end{subfigure}%
        \begin{subfigure}[b]{0.33\textwidth}
            \centering
            \includegraphics[page=2]{figures/layer-cuts-and-pseudochords}
            \subcaption{}
            \label{fig:l0-to-l1B}
        \end{subfigure}%
        \begin{subfigure}[b]{0.33\textwidth}
            \centering
            \includegraphics[page=3]{figures/layer-cuts-and-pseudochords}
            \subcaption{}
            \label{fig:l0-to-l1C}
        \end{subfigure}%
        \caption{Illustration for the proofs of \cref{lem:level-edge-l0-to-l1,lem:level-edge-l1-to-l0-base-case}.
        }
        \label{fig:l0-to-l1}
    \end{figure}

    We start with the first phase, consisting of a single, straightforward flip.

    \begin{claim}
        \label{lem:level-edge-l0-to-l1}
        An $st$-path $P\in\chordfreepaths(S,s)$ with $s,t\in \layer_0$
        can be flipped to a chord-free $st'$-path $P'\in\paths(S,s)$ with $t'\in \layer_1$ such that~$\leveledges_0(P')>\leveledges_0(P)$.
    \end{claim}

    \begin{claimproof}
        It suffices to show that one of the two points adjacent to $t$ in $\layer_0$ has an outgoing edge to~$\layer_1$.
        We denote these two adjacent points by $a$ and $b$, respectively, see~\cref{fig:l0-to-l1A}.
        Without loss of generality, we may assume that~$P$ visits $a$ before~$b$.
        If $a=s$, we are done: Recall that due to $P\in\chordfreepaths(S,s)$, the outgoing edge of $s$ goes to a point in $\layer_1$.
        Assume now that $a\neq s$.
        Then, by \cref{lem:pseudochord}, $P[s,a]$ visits all points in $\layer_0^-(s,a;t)$.
        In particular, the outgoing edge of~$a$ does neither end at its adjacent points in $\layer_0$ ($t$ is excluded because $P$ visits $b$ between $a$ and~$t$ and the other point is excluded because~$P[s,a]$ visits all points in~${\layer_0^-(s,a;t)}$), nor elsewhere in $\layer_0$ due to chord-freeness of~$P$.
        Thus, the outgoing edge of~$a$ ends in some $p\in \layer_1$ as depicted in~\cref{fig:l0-to-l1A}.
        Therefore, we flip the non-level edge $ap$ to the level edge $at$, yielding a path~$P'$, see~\cref{fig:l0-to-l1B}.
        As a result, $P'$ ends on $p\in \layer_1$ and has one more level edge in $\layer_0$.
    \end{claimproof}

    Now, we turn our attention to the second phase, which is a little more intricate.
    If $\leveledges_0(P')$ is maximal, then $P'$ is \suffixgood and we are done.
    Otherwise, $\leveledges_0(P')<|\layer_0|-1$.
    Thus, by \cref{lem:layer-monotone-paths}, there exists an outward edge~$pq$.

    For simplicity, we denote the current path by $P$ and its end by $t$ throughout the second phase.
    We distinguish three cases:
    (1) $t$ sees $p$,
    (2) the visibility is blocked by an $\layer_1$-edge, or
    (3) the visibility is blocked by a cutting edge.
    In cases~(1) and (2), we are able to flip to a path ending in $\layer_0$ as desired.
    In case~(3), we reduce the number of cutting edges and will eventually end in case~(1) or (2).
    It thus remains to consider these three cases.

    \begin{description}
        \item[Case (1):] $t$ sees $p$.
    \end{description}

    \begin{claim}
        \label{lem:level-edge-l1-to-l0-base-case}
        Let $P\in\chordfreepaths(S,s)$ be a path from $s\in \layer_0$ to $t\in \layer_1$.
        If $t$ sees a point $p\in \layer_1$ with an outgoing edge to $q\in \layer_0$, then~$P$ can be flipped to an $st'$-path $P'\in\paths^*(S,s)$ such that~$t'\in \layer_0$ and~$\leveledges_0(P')=\leveledges_0(P)$.
    \end{claim}
    \begin{claimproof}
        We replace the edge $pq$ by $pt$, resulting in a path that ends in $q\in \layer_0$.
        As~the edge $pq$ is an inter-layer edge, we have $\leveledges_0(P')=\leveledges_0(P)$.
        Moreover, since $pt$ is an $\layer_1$-edge, $P'$ is chord-free.
        For illustrations, see \cref{fig:l0-to-l1B,fig:l0-to-l1C}.
    \end{claimproof}

    Otherwise, if $t$ does not see a point $p$ with an outward edge $pq$, the segment~$tp$ intersects edges of~$P$ which obstruct the visibility.
    Let $e=uv$ be the first obstructing edge along $tp$.

    Recall that in this case, the {$\layer_1$-suffix} of a path is the maximal suffix that contains only points on~$\layer_1$.
    Without loss of generality, we may assume that~$e$ does not belong to the $\layer_1$-suffix $X$ of~$P$.
    Otherwise, there exist two points $x_1,x_2$ on $X$ such that~$\layer_1^-(x_1,x_2; p)\cup\{x_1,x_2\}$ contains
    the entire $\layer_1$-suffix $X$.
    Thus, we may use \Cref{lem:convex-fixed-flipgraph} on $X$ to flip to a path which ends at $x_1$ or~$x_2$.
    Consequently, no edge of~$X$ may obstruct the visibility to $p$.
    For later reference, it is important to note that this step only introduces $\layer_1$-edges because $X$ lives on a subset of~$\layer_1$.

    Moreover, note that at most one of~$u$ and~$v$ belongs to $\layer_0$ due to the chord-freeness of~$P$; further, if $u\in \layer_1, v\notin \layer_0$, then clearly $t$ sees $u$ and we obtain a scenario as in case~(1).
    Thus, it remains to consider the cases (2) $u,v\in \layer_1$, and~(3)~${u\in \layer_0, v\in \layer_1}$.

    \begin{description}
        \item[Case (2):] The first obstructing edge $uv$ has $u,v\in \layer_1$ (and~$u,v\notin X$).
    \end{description}

    We observe that~$t$ sees both $u$ and~$v$ because $\layer_1$ is convex, and~$uv$ is the first obstructing edge.
    Let $A$ denote the points of~$X$, and~$a\in \layer_1$ the start of~$X$.
    By \cref{lem:convex-fixed-flipgraph} and convexity of~$\layer_1$, the flip graph of~$\paths(A,a)$ is connected.
    We can therefore apply \cref{lem:convex-region-chord-flip} to $X$ and~$uv$ to obtain a
    path that ends on $\layer_0$.
    This only modifies edges within $\layer_1$, and hence does not affect the number of level edges in $\layer_0$.

    \begin{description}
        \item[Case (3):] The first obstructing edge $uv$ has $u\in \layer_0$, $v\in \layer_1$ (and~$u,v\notin X$).
    \end{description}

    We can assume, without loss of generality, that the $\layer_1$-suffix $X$ of~$P$ contains all points of~$\layer_1^+(u,v;t)$.
    This stems from the fact that, if there was a point in~${\layer_1^+(u,v;t)}$ that is not part of~$X$, there exists a (potentially different) point~$p'$ in $\layer_1^+(u,v;t)$ with an outward edge.
    As a result, we could then select~$p'$ as a candidate for $p$ that is not obstructed by $uv$, which leads  us to either case~(1)~or~(2).

    We show how to end in a path with fewer cutting edges.
    This will bring us to one of the cases (1)--(3).
    As the number of cutting edges is bounded, we will eventually end in case (1) or (2), both of which imply the claim.
    Let $\cuttingedges(P)$ denote the number of cutting edges of~$P$.

    \begin{claim}
        \label{lem:cutting-edge-removal}
        Let $P\in\chordfreepaths(S,s)$ be an $st$-path with $s\in \layer_0$, $t\in \layer_1$ and~$uv$ an inward cutting edge.
        If the $\layer_1$-suffix $X$ of~$P$ contains all points of~$\layer_1^+(u,v;t)$, then $P$ can be flipped to a path $P'\in\chordfreepaths(S,s)$ with $\leveledges_0(P')\geq \leveledges_0(P)$ and~$\cuttingedges(P')<\cuttingedges(P)$.
    \end{claim}
    \begin{claimproof}
        As $uv$ is an inward cutting edge, we have  $u\in \layer_0$ and~$v\in \layer_1$.
        First we show that~${s\in L^-_0(u,v;t)}$, or $u=s$:
        Suppose $s\in L^+_0(u,v;t)$.
        Because $P\in\chordfreepaths(S,s)$, the path~$P$ starts with an inward edge.
        Hence, there exists a point in $L^+_1(u,v;t)$ which is not part of~$X$, a contradiction.

        \begin{figure}[htb]
            \centering
            \begin{subfigure}[b]{0.33\textwidth}
                \centering
                \includegraphics[page=5]{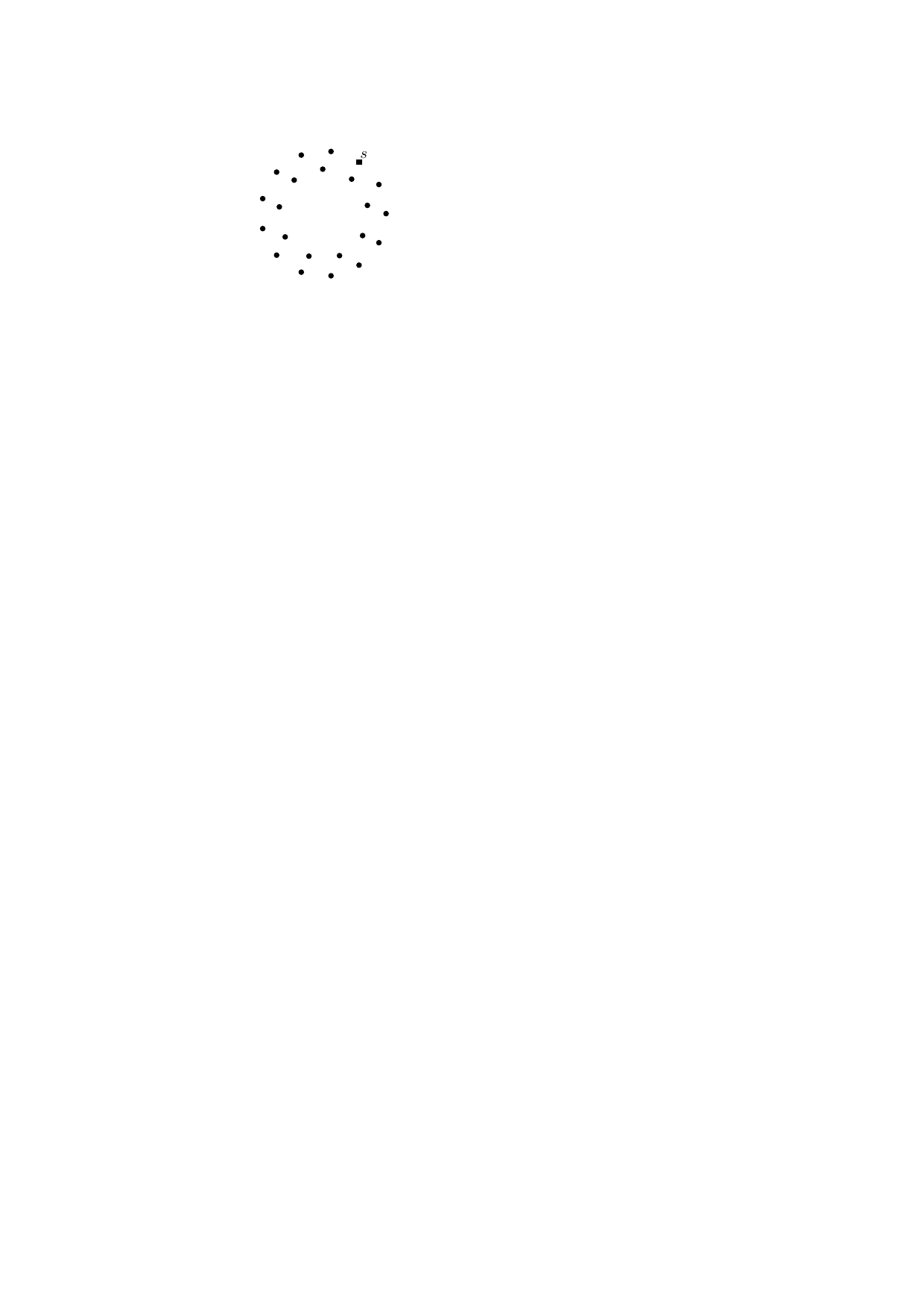}
                \subcaption{}
                \label{fig:cutting-edge-e}
            \end{subfigure}%
            \hfil%
            \begin{subfigure}[b]{0.33\textwidth}
                \centering
                \includegraphics[page=6]{figures/cutting-edge}
                \subcaption{}
                \label{fig:cutting-edge-f}
            \end{subfigure}%
            \hfil%
            \begin{subfigure}[b]{0.33\textwidth}
                \centering
                \includegraphics[page=7]{figures/cutting-edge}
                \subcaption{}
                \label{fig:cutting-edge-g}
            \end{subfigure}%
            \hfil%
            \\
            \hfil%
            \begin{subfigure}[b]{0.33\textwidth}
                \centering
                \includegraphics[page=2]{figures/cutting-edge}
                \subcaption{}
                \label{fig:cutting-edge-b}
            \end{subfigure}%
            \hfil%
            \begin{subfigure}[b]{0.33\textwidth}
                \centering
                \includegraphics[page=3]{figures/cutting-edge}
                \subcaption{}
                \label{fig:cutting-edge-c}
            \end{subfigure}%
            \hfil%
            \begin{subfigure}[b]{0.33\textwidth}
                \centering
                \includegraphics[page=4]{figures/cutting-edge}
                \subcaption{}
                \label{fig:cutting-edge-d}
            \end{subfigure}
            \hfil%
            \caption{Illustration for \cref{lem:cutting-edge-removal}.}
            \label{fig:cutting-edge}
        \end{figure}

        We define $P'$ as the longest suffix of~$P$ which does not cross the extension of~$uv$, see \Cref{fig:cutting-edge-b,fig:cutting-edge-e}.
        Note that~$P'$ contains exactly one inward edge $ab$; otherwise there exists a point in $L^+_1(u,v;t)$ which is not part of~$X$.
        We remark that~$ab$ may coincide with $uv$.
        Thus, $P'$ visits all points in $\layer_0^+(u,v;t)$ before going to $\layer_1$.
        Consequently, $P'$ is \suffixgood, see~\Cref{fig:cutting-edge-b,fig:cutting-edge-e}.
        Let $z$ be the point in $\layer_1^+(u,v;t)$ such that~$uz$ is not a cutting edge and~$u$ has a neighbor in $\layer_1$ which belongs to $\layer_1^-(u,v;t)$.

        We aim to flip $P'[a,t]$ to a path that ends in $z$.
        If $\layer_1^+(u,v;t)=\{z=b\}$, we are done.
        If $b\neq z$, as in \Cref{fig:cutting-edge-e}, then we flip $P'[b,t]$ (and thus $P$) to a path that ends at $z$ by \Cref{lem:convex-fixed-flipgraph,lem:suffix-good-reversible-path}.
        The resulting path is depicted in \Cref{fig:cutting-edge-f}.
        If $b=z$, as in \Cref{fig:cutting-edge-b}, we consider $P'[a,t]$.
        In this case, $a$ sees a point $c\in \layer_1^+(u,v;t)$ such that~$ac$ is not a cutting edge.
        Using \Cref{lem:convex-fixed-flipgraph,lem:suffix-good-reversible-path}, we flip the suffix $P'[b,t]=X$ to a path that ends in $c$.
        As $a$ then sees $c$, we may replace the edge $ab$ with $ac$, which is not a cutting edge.
        We~obtain a path with endpoint $z=b$, see \Cref{fig:cutting-edge-c}.

        Finally, we replace $uv$ by $uz$, see \Cref{fig:cutting-edge-d,fig:cutting-edge-g}.
        This does not affect the total number of~$\layer_0$-level edges, but reduces the number of cutting edges.
    \end{claimproof}
    This finishes the proof of~\Cref{thm:l0-level-increase}.
\end{proof}

Together, \cref{thm:k-prop-implies-connection,thm:l0-level-increase} imply  \cref{thm:2conv}.
\thmConvOut*

With similar arguments, we can deduce~\cref{cor:2conv}, confirming \cref{conj:conn} for point sets with layer number up to two.
We proceed as follows.

\thmFlipGraphConnected*

\begin{proof}
    By \cref{thm:2conv}, the flip graph of~$\paths(S,p)$ is connected for every ${p\in \layer_0}$.
    As there~exists a $pq$-path in $\paths(S,p)$ for any $q\in \layer_0$ by \cref{lem:suffix-good-reversible-path}, the flip graph of~$\bigcup_{p\in \layer_0}\paths(S,p)$ is connected.

    It remains to argue that any {$st$-path~$P{\in\paths(S)}$} with $s,t\in \layer_1$ can be flipped to a path with at least one end in $\layer_0$.
    To this end, we view $P$ as a directed path from $s$ to $t$ for the remainder of this proof.
    Let~$X$ refer to the $\layer_1$-suffix of~$P$.
    We distinguish three cases:
    \begin{description}
        \item[(1)] If $t$ sees $s$, we add the edge $st$ and delete any outgoing edge of an \out~point.
        \item[(2)] If $t$ sees a point $p$ such that~$pq\in P$ with $q\in \layer_0$, we replace $pq$ with $tp$.
        \item[(3)] If $t$ sees neither $s$ nor $p$, there exists an obstructing edge.
        We argue as~follows:
    \end{description}
    \vspace{-0.7em}

    Let $e=uv$ be the first obstructing edge along $ts$.
    Without loss of generality, we assume that~$e$ does not belong to~$X$.
    Otherwise, there exist two points $x_1,x_2$ on $X$ such that the entire $\layer_1$-suffix~$X$ lives on $\layer_1^-(x_1,x_2; t)\cup\{x_1,x_2\}$.
    We may thus use \Cref{lem:convex-fixed-flipgraph} on $X$ to flip to a path which ends at $x_1$ or~$x_2$.
    Consequently, no edge of~$X$ may obstruct the visibility between $s$ and $t$.

    If~$t$ can see both $u$ and~$v$, we apply \cref{lem:convex-region-chord-flip} to the \good suffix $X$ of $P$ and $uv$, flipping $P$ to a path that ends on~$\layer_0$.

    If $t$ cannot see one of~$u$ and~$v$, then $uv$ is either a chord or a cutting edge.
    If $uv$ is a chord, then we use \cref{thm:2conv} to flip the \good path $P[u,t]$ to end on an \out point.
    It remains to argue that the same is possible if $uv$ is a cutting edge.
    We argue analogously to the proof of~\cref{thm:l0-level-increase} in \cref{lem:cutting-edge-removal}.

    Assume that $uv$ is an inward cutting edge, i.e., $u\in \layer_0$ and~$v\in \layer_1$.
    There either exists a point $p$ with an outgoing edge to some $q\in\layer_0$ that is not obstructed by $uv$, or we can find a flip sequence to remove the edge $uv$ from the~path.
    In the first case, we flip the outgoing edge of $p$ and are done.

    We therefore proceed by showing how to remove $uv$ from $P$ and refer to the illustrations in~\cref{fig:cutting-edge} for an intuition, as we argue analogously.
    We can assume, without loss of generality, that the $\layer_1$-suffix $X$ of~$P$ contains all points of~$\layer_1^+(u,v;t)$.
    This stems from the fact that, if there was a point in~${\layer_1^+(u,v;t)}$ that is not part of~$X$, there exists a (potentially different) point~$p'$ in $\layer_1^+(u,v;t)$ with an outward edge.
    As a result, we could then select~$p'$ as a candidate for $p$ that is not obstructed by $uv$.
    It follows that $s\in L^-_1(u,v;t)$.

    We define $P'$ as the longest suffix of~$P$ which does not cross the extension of~$uv$.
    Note that~$P'$ contains exactly one inward edge $ab$; otherwise there exists a point in $L^+_1(u,v;t)$ which is not part of~$X$.
    We remark that~$ab$ may coincide with~$uv$.
    Thus, $P'$ visits all points in $\layer_0^+(u,v;t)$ before going to $\layer_1$.
    Consequently,~$P'$ is \suffixgood.
    Let $z$ be the point in $\layer_1^+(u,v;t)$ such that~$uz$ is not a cutting edge and~$u$ has a neighbor in $\layer_1^-(u,v;t)$.

    We aim to flip $P'[a,t]$ to a path that ends in $z$.
    If $\layer_1^+(u,v;t)=\{z=b\}$, we are done.
    If~${b\neq z}$, then we flip $P'[b,t]$ (and thus $P$) to a path that ends at~$z$ by \Cref{lem:convex-fixed-flipgraph,lem:suffix-good-reversible-path}.
    If $b=z$, we consider $P'[a,t]$.
    In this case, $a$ sees a point $c\in \layer_1^+(u,v;t)$ such that~$ac$ is not a cutting edge.
    Using \Cref{lem:convex-fixed-flipgraph,lem:suffix-good-reversible-path}, we flip the suffix $P'[b,t]=X$ to a path that ends in $c$.
    As $a$ then sees $c$, we may replace the edge $ab$ with $ac$, which is not a cutting edge.
    We obtain a path with endpoint $z=b$ and finally replace $uv$ by $uz$.
    This does not affect the total number of~$\layer_0$-level edges, but reduces the number of cutting edges by removing $uv$.

    We have shown that any {$st$-path~$P{\in\paths(S)}$} with $s,t\in \layer_1$ can be flipped to a path with at least one end in $\layer_0$, and
    thus conclude that the flip graph of~$\paths(S)$ is connected for point sets in general position with  $\layernumber(S)\leq 2$.
\end{proof}
\section{Discussion}
\label{sec:discussion}

In this work, we present progress towards proving \cref{conj:conn,conj:connFixed}.
In particular, we provide first results concerning the connectivity of flip graphs induced by plane spanning paths with a fixed end.
Moreover, we show that the subgraph induced by \suffixgood paths is connected.
Lastly, we present tools to allow for the extension of known connectivity results, based on empty convex regions within spanning paths, and show how to conclude connectivity for point sets with up to two convex layers.
Our insights indicate that investigating paths with a fixed end is a fruitful strategy.

We are optimistic that our tools prove to be useful to settle \cref{conj:conn,conj:connFixed}.
Specifically, we believe that \cref{lem:convex-region-chord-flip} could enable an inductive approach to settling the conjectures, based on the layer number of point sets.
Our application of \cref{lem:convex-region-chord-flip} to extend results from convex position (\cref{lem:convex-fixed-flipgraph}) to two convex layers (\cref{thm:2conv,cor:2conv}) could be viewed as a first step.

\bibliography{bibliography}

\end{document}